\def\@citecolor{blue}%
\def\@urlcolor{blue}%
\def\@linkcolor{blue}%
\def\orcidID#1{\smash{\href{http://orcid.org/#1}{\protect\raisebox{-1.25pt}{\protect\includegraphics{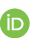}}}}}
\newtheorem{contribution}{Contribution}
\definecolor{lightgray}{gray}{0.95}
\newcommand{\cmark}{\ding{51}}%
\newcommand{\xmark}{\ding{55}}%
\pgfplotsset{compat=1.15}
\pgfplotsset{compat=newest}
\newcommand{\qedexample}{\hfill$\triangle$}    
\newcommand{\idMat}{I}
\newcommand{\specRad}[1]{\rho(#1)}
\newcommand{\maxnorm}[1]{||#1||_{\infty}}
\newcommand{\allgreater}{\succ}
\newcommand{\allsmaller}{\prec}
\newcommand{\system}[1]{\vec{#1}}
\newcommand{\graphpps}[1]{G_{#1}}
\newcommand{\jac}[1]{#1'}
\renewcommand{\epsilon}{\varepsilon}
\newcommand{\eps}{\varepsilon}
\newcommand{\Nats}{\mathbb{N}}
\newcommand{\Reals}{\mathbb{R}}
\newcommand{\NonNegReals}{\Reals_{\geq 0}}
\newcommand{\NonNegRealsInfty}{\overline{\Reals}_{\geq 0}}
\newcommand{\Rats}{\mathbb{Q}}
\newcommand{\lfp}[1]{\mu #1}
\newcommand{\mc}{\mathcal{M}}
\newcommand{\dist}[1]{\mathcal{D}(#1)}
\newcommand{\prob}{\mathbb{P}}
\newcommand{\pds}{\Delta}
\newcommand{\ppda}{\Delta}
\newcommand{\ppdaStates}{Q}
\newcommand{\ppdaFun}{P}
\newcommand{\ppdaInit}{(\ppdaStates,\abStack,\ppdaFun)}
\newcommand{\trans}[3]{#1 \xrightarrow{#2} #3} 
\newcommand{\abStack}{\Gamma}
\newcommand{\retProb}[3]{[#1#2#3]}
\newcommand{\retProbVar}[3]{\langle#1#2#3\rangle}
\newcommand{\systemRt}[1]{\system f_{#1}^\mathcal{R}}
\newcommand{\systemRp}[1]{\system f_{#1}}
\newcommand{\systemRpClean}[1]{\system{\hat{f}}_{#1}}
\newcommand{\toolname}[1]{\textsc{#1}}
\newcommand{\benchmark}[1]{\textsf{#1}} 
\def\thanks#1{\protected@xdef\@thanks{\@thanks
        \protect\footnotetext{#1}}}
\begin{document}
\title{
    Certificates for Probabilistic Pushdown Automata via Optimistic Value Iteration
    \thanks{\setlength{\leftskip}{0em}%
        This work is supported by the DFG research training group 2236 UnRAVeL, the ERC advanced research grant 787914 FRAPPANT, and the European Union’s Horizon 2020 research and innovation programme under the Marie Skłodowska-Curie grant agreement No 101008233 (MISSION).}
}
\titlerunning{Certificates for Probabilistic Pushdown Automata via OVI}
%
\author{
    Tobias Winkler$^{\text{\Envelope}}$%
    \orcidID{0000-0003-1084-6408}
    \and
    Joost-Pieter Katoen%
    \orcidID{0000-0002-6143-1926}
}
\authorrunning{T. Winkler and J.-P. Katoen}
%
\institute{
    RWTH Aachen University, Aachen, Germany\\
    \texttt{\{tobias.winkler,katoen\}@cs.rwth-aachen.de}
}
\maketitle              
\begin{abstract}
    Probabilistic pushdown automata (pPDA) are a standard model for discrete probabilistic programs with procedures and recursion.
    In pPDA, many quantitative properties are characterized as least fixpoints of polynomial equation systems.
    In this paper, we study the problem of \emph{certifying} that these quantities lie within certain bounds.
    To this end, we first characterize the polynomial systems that admit easy-to-check certificates for validating bounds on their least fixpoint.
    Second, we present a sound and complete Optimistic Value Iteration algorithm for computing such certificates.
    Third, we show how certificates for polynomial systems can be transferred to certificates for various quantitative pPDA properties.
    Experiments demonstrate that our algorithm computes succinct certificates for several intricate example programs as well as stochastic context-free grammars with $> 10^4$ production rules.

\keywords{Probabilistic Pushdown Automata  \and Probabilistic Model Checking \and Certified Algorithms \and Probabilistic Recursive Programs.}
\end{abstract}

\section{Introduction}

Complex software is likely to contain bugs.
This applies in particular to model checking tools.
This is a serious problem, as the possibility of such bugs compromises the trust one can put in the verification results, rendering the process of formal modeling and analysis less useful.
Ideally, the implementation of a model checker should be formally verified itself~\cite{DBLP:conf/cav/EsparzaLNNSS13}.
However, due to the great complexity of these tools, this is often out of reach in practice.
\emph{Certifying algorithms}~\cite{DBLP:journals/csr/McConnellMNS11} mitigate this problem by providing an \emph{easy-to-check certificate} along with their regular output.
This means that there exists a \emph{verifier} that, given the input problem, the output, and the certificate, constructs a formal proof that the output is indeed correct.
The idea is that the verifier is much simpler than the algorithm, and thus likely to be bug-free or even amenable to formal verification.

\begin{figure}[t]
    \begin{minipage}{0.6\textwidth}
        \begin{align*}
        X &~\rightarrow~ a \mid XYY 			 & x &~=~ \frac{1}{2}( 1 +  xy^2)\\
        Y &~\rightarrow~ b \mid X \mid YY 	 & y &~=~ \frac{1}{3} (1 +  x + y^2)
        \end{align*}
    \end{minipage}
    \hfill
    \begin{minipage}{0.35\textwidth}
        \begin{tikzpicture}[scale=0.8]
        \begin{axis}[
        scale=0.45,
        axis lines = middle,
        xlabel = {$x$},
        ylabel = {$y$},
        xmin=0.4, xmax=1.1,
        ymin=0.4, ymax=1.1]
        
        \addplot [name path = B,
        red,
        domain = 0.1:1.1] {sqrt(2*x - 1) / sqrt(x)};
        
        \addplot [name path = A,
        blue,
        domain = 0.1:1.1,
        samples = 100] {0.3333 * (1 + x + x*x)};
        
        \coordinate (lfp) at (0.6626, 0.7005);
        \node[fill=black,circle,scale=0.4,label={right:\scriptsize $\approx(.66,.7)$}] (lfp) at (lfp) {};
        
        \node[fill=black,circle,scale=0.4,label={left,xshift=-0.5mm:\scriptsize $(1,1)$}]  at (1,1) {};
        
        \addplot [teal!10] fill between [of = A and B, soft clip={domain=0.6626:1.0}];
        
        \end{axis}
        \end{tikzpicture}
    \end{minipage}
    \caption{
        Left: A stochastic context-free grammar (SCFG; e.g.~\cite{DBLP:conf/stacs/EtessamiY05}) and
        the associated positive polynomial system (PPS) which encodes the termination probabilities of each non-terminal, assuming production rules are taken uniformly at random.
        Right: The curves defined by the two equations.
        The least fixpoint (lfp) is $\approx (0.66,0.70)$.
        The thin colored area to the top right of the lfp is the set of inductive,  self-certifying upper bounds on the lfp.
    }
    \label{fig:introExample}
\end{figure}

This paper extends the recent line of research on probabilistic certification~\cite{DBLP:conf/tacas/FunkeJB20,DBLP:phd/dnb/Jantsch22,DBLP:conf/atva/Jantsch0B20,DBLP:conf/tacas/WimmerM20} to \emph{probabilistic pushdown automata}~\cite{DBLP:conf/lics/EsparzaKM04,DBLP:journals/lmcs/KuceraEM06} (pPDA).
pPDA and related models have applications in, amongst others, pattern recognition~\cite{DBLP:journals/prl/SimistiraKC15}, computational biology~\cite{DBLP:journals/nar/KnudsenH03}, and speech recognition~\cite{DBLP:conf/icassp/JurafskyWSSFTM95}.
They are moreover a natural operational model for programs with procedures, recursion, and (discrete) probabilistic constructs such as the ability to flip coins.
With the advent of \emph{probabilistic programming}~\cite{DBLP:journals/corr/abs-1809-10756} as a paradigm for model-based machine learning~\cite{bishop2013model}, such programs have received lots of attention recently.
Moreover, several efficient algorithms such as Hoare's quicksort with randomized pivot selection (e.g.~\cite{DBLP:journals/dam/Karp91}) are readily encoded as probabilistic recursive programs.

A pPDA can be seen as a purely probabilistic variant of a standard pushdown automaton:
Instead of reading an input word, it takes its transitions randomly based on fixed probability distributions over successor states.
Quantities of interest in pPDA include reachability probabilities~\cite{DBLP:conf/lics/EsparzaKM04}, expected runtimes~\cite{DBLP:journals/jcss/BrazdilKKV15}, variances~\cite{DBLP:conf/lics/EsparzaKM05}, satisfaction probabilities of temporal logic formulas~\cite{DBLP:conf/qest/YannakakisE05,DBLP:conf/fossacs/WinklerGK22}, and others (see~\cite{DBLP:journals/fmsd/BrazdilEKK13} for an overview).
pPDA are equivalent to \emph{Recursive Markov Chains}~\cite{DBLP:journals/jacm/EtessamiY09}.
In the past two decades there have been significant research efforts on efficient \emph{approximative} algorithms for pPDA, especially a decomposed variant of \emph{Newton iteration}~\cite{DBLP:conf/stacs/EtessamiY05,DBLP:conf/stoc/KieferLE07,DBLP:conf/stacs/EsparzaKL08,DBLP:journals/jacm/EtessamiY09,DBLP:journals/siamcomp/EsparzaKL10,DBLP:conf/stacs/EsparzaGK10,DBLP:journals/jacm/StewartEY15} which provides guaranteed lower, and occasionally upper~\cite{DBLP:conf/stacs/EsparzaGK10,DBLP:journals/siamcomp/EsparzaKL10} bounds on key quantities.
However, even though implementations might be complex~\cite{premo}, these algorithms are not certifying.

Our technique for certificate generation is a non-trivial extension of \emph{Optimistic Value Iteration}~\cite{ovi} (OVI) to pPDA.
In a nutshell, the idea of OVI is to compute \emph{some} lower bound $\vec l$ on the solution---which can be done using an approximative iterative algorithm---and then \emph{optimistically} \emph{guess} an upper bound $\vec u = \vec l + \vec\eps$ and verify that the guess was correct.
Prior to our paper, OVI had only been considered in Markov Decision Processes (MDP)~\cite{ovi} and Stochastic Games (SG)~\cite{DBLP:conf/atva/AzeemEKSW22}, where it is used to compute bounds on, e.g., maximal reachability probabilities.
The upper bounds computed by OVI have a special property:
They are \emph{self-certifying} (also called \emph{inductive} in our paper):
Given the system and the bounds, one \emph{can check very easily that the bounds are indeed correct}.

However, \emph{pPDA are much more complicated than MDP or SG} for the following reasons:
(i) pPDA may induce \emph{infinite-state} Markov processes due to their unbounded stack;
(ii) the analysis of pPDA requires solving \emph{non-linear equations};
(iii) the complexity of basic decision problems is generally higher than in MDP/SG.
For example, reachability in MDP is characterized as the \emph{least fixpoint} (lfp) of a piece-wise \emph{linear} function that can be computed in PTIME via, e.g., LP solving.
On the other hand, reachability in pPDA requires computing a fixed point of a \emph{positive polynomial} function, leading to a PSPACE complexity bound~\cite{DBLP:conf/lics/EsparzaKM04}.
See \Cref{fig:introExample} for an example.


\paragraph{Contributions.}
Despite the difficulties mentioned above, we show in this paper that
\ul{the general idea of OVI can be extended to pPDA, yielding a practically feasible algorithm with good theoretical properties}.
More concretely:

\begin{contribution}
\emph{
    We present an OVI-style algorithm for computing inductive upper bounds of any desired precision $\eps > 0$ on the lfp of a positive polynomial system.
    Compared to the existing OVI~\cite{ovi}, the key novelty of our algorithm is to compute a certain \emph{direction} $\vec v$ in which to guess, i.e., the guess is $\vec u = \vec l + \eps \vec v$ rather than $\vec u = \vec l + \vec \eps$.
    The direction $\vec v$ is an estimate of a certain \emph{eigenvector}.
    This ensures that we eventually hit an inductive bound, even if the latter lie in a very \enquote{thin strip} as in \Cref{fig:introExample}, and yields a \emph{provably complete} algorithm that is guaranteed to find an inductive bound in finite time (under mild assumptions).
}
\end{contribution}

\begin{contribution}
\emph{
    We implement our algorithm in the software tool \toolname{pray} and compare the new technique to an out-of-the-box approach based on SMT solving, as well as to standard OVI with a simpler guessing heuristic.
}
\end{contribution}

\paragraph{Related Work.}
Certification of pPDA has not yet been addressed explicitly, but some existing technical results go in a similar direction.
For instance, \cite[Prop.\ 8.7]{DBLP:journals/jacm/EtessamiY09} yields certificates for non-termination in SCFG, but they require an SCC decomposition for verification.
\emph{Farkas certificates} for MDP~\cite{DBLP:conf/tacas/FunkeJB20} are more closely related to our idea of certificates.
They require checking a set of \emph{linear} constraints.
A symbolic approach to verify probabilistic recursive programs on the syntax level including inductive proof rules for upper bounds was studied in~\cite{DBLP:conf/lics/OlmedoKKM16}.
A higher-order generalization of pPDA was introduced in~\cite{DBLP:journals/lmcs/KobayashiLG20}, and an algorithm for finding upper bounds inspired by the Finite Element method was proposed.
Applications of PPS beyond the analysis of pPDA include the recent \emph{factor graph grammars}~\cite{DBLP:conf/nips/0001R20} as well as obtaining approximate counting formulas for many classes of trees in the framework of \emph{analytic combinatorics}~\cite{DBLP:books/daglib/0023751}.
Regarding software tools, \toolname{PReMo}~\cite{premo} implements iterative algorithms for lower bounds in Recursive Markov Chains, but it supports neither certificates nor upper bounds.

\paragraph{Paper Outline.}
We review the relevant background information on PPS in \Cref{sec:prelims}.
\Cref{sec:theoryMain} presents our theoretical results on inductive upper bounds in PPS as well as the new  Optimistic Value Iteration algorithm.
In \Cref{sec:ppda} we explain how inductive bounds in PPS are used to certify quantitative properties of pPPA.
The experimental evaluation is in \Cref{sec:experiments}.
We conclude in \Cref{sec:conclusion}.
\iftoggle{arxiv}{}{A full version of this paper is available online~\cite{arxiv}}.

\section{Preliminaries}
\label{sec:prelims}

\paragraph{Notation for Vectors.}
All vectors in this paper are \emph{column} vectors and are written in boldface, e.g., $\vec u = (u_1, \ldots,  u_n)^T$.
For vectors $\vec u , \vec u'$, we write $\vec u \leq \vec u'$ if $\vec u$ is \emph{component-wise} less than or equal to $\vec u'$.
Moreover, we write $\vec u < \vec u'$ if $\vec u \leq \vec u'$ and $\vec u \neq \vec u'$, and $\vec u \allsmaller \vec u'$ if $\vec u$ is component-wise \emph{strictly} smaller than $\vec u'$.
The zero vector is denoted $\vec 0$.
The \emph{max norm} of a vector $\vec u$ is $\maxnorm{\vec u} = \max_{1\leq i \leq n} |u_i|$.
We say that $\vec u$ is \emph{normalized} if $\maxnorm{\vec u} = 1$.

\paragraph{Positive Polynomial Systems (PPS).}

Let $n \geq 1$ and $\vec x = (x_1 ,\ldots ,x_n)^T$ be a vector of variables.
An $n$-dimensional PPS is an equation system of the form
\[
    x_1 ~=~ f_1(x_1,\ldots,x_n) \quad\ldots\quad x_n ~=~ f_n(x_1,\ldots,x_n)
\]
where for all $1 \leq i \leq n$, the function $f_i$ is a \emph{polynomial with non-negative real coefficients}.
An example PPS is the system $x = \frac{1}{2}(1 + xy^2), y = \frac{1}{3}(1 + x + y^2)$ from \Cref{fig:introExample}.
We also use vector notation for PPS: $\vec x = \system f(\vec x) = (f_1(\vec x), \ldots, f_n(\vec x))^T$.

We write $\NonNegRealsInfty = \NonNegReals \cup \{\infty\}$ for the \emph{extended non-negative reals}.
By convention, for all $a \in \NonNegRealsInfty$, $a \leq \infty$, $a + \infty = \infty + a = \infty$, and $a \cdot \infty = \infty \cdot a$ equals $0$ if $a = 0$ and $\infty$ otherwise.
For $n \geq 1$, the partial order $(\NonNegRealsInfty^n, \leq)$ is a \emph{complete lattice}, i.e., all subsets of $\NonNegRealsInfty^n$ have an infimum and a supremum.
In particular, there exists a least element $\vec 0$ and a greatest element $\vec \infty = (\infty, \ldots, \infty)^T$.
Every PPS induces a \emph{monotone} function $\system{f} \colon \NonNegRealsInfty^n \to \NonNegRealsInfty^n$, i.e., $\vec{u} \leq \vec{v} \implies \system{f}(\vec{u}) \leq \system{f}(\vec{v})$.
By the Knaster-Tarski fixpoint theorem, the set of fixpoints of $\system f$ is also a complete lattice, and thus there exists a \emph{least fixpoint} (lfp) denoted by $\lfp \system f$.

In general, the lfp $\lfp\system f$ is a vector which may contain $\infty$ as an entry.
For instance, this happens in the PPS $x = x+1$.
A PPS $\system{f}$ is called \emph{feasible} if $\lfp\system f \allsmaller \vec \infty$ (or equivalently, $\lfp \system{f} \in \NonNegReals^n$).
The Knaster-Tarski theorem also implies:

\begin{lemma}[Inductive upper bounds]
    \label{thm:induction}
    For all $\vec u \in \NonNegRealsInfty^n$ it holds that
    \begin{align*}
        \system{f}(\vec{u}) \leq \vec u
        \quad\text{implies}\quad
        \lfp \system f \leq \vec u ~.
    \end{align*}
    Such a vector $\vec u$ with $\vec u \allsmaller \vec\infty$ is called \emph{inductive} upper bound.
\end{lemma}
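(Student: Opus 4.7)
The plan is to derive this from the Knaster--Tarski theorem, whose invocation the paper has already prepared. I see two natural routes, and I would present the simpler one and mention the alternative in passing.

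The main route is the standard \emph{Knaster--Tarski restriction argument}. Let $\vec u \in \NonNegRealsInfty^n$ with $\system f(\vec u) \leq \vec u$. Consider the down-set $D = \{\vec v \in \NonNegRealsInfty^n \mid \vec v \leq \vec u\}$. Since $(\NonNegRealsInfty^n, \leq)$ is a complete lattice and $D$ is closed under arbitrary suprema and infima (infima and suprema in $D$ coincide with those in the ambient lattice, with $\vec 0$ as least element and $\vec u$ as greatest), $D$ is itself a complete lattice. By monotonicity of $\system f$ (recorded in the preliminaries), any $\vec v \in D$ satisfies $\system f(\vec v) \leq \system f(\vec u) \leq \vec u$, so $\system f$ restricts to a monotone self-map $\system f \colon D \to D$. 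Applying Knaster--Tarski inside $D$ yields a fixpoint $\vec u^\star \in D$ of $\system f$. Since $\vec u^\star$ is also a fixpoint of $\system f$ on all of $\NonNegRealsInfty^n$, we get $\lfp \system f \leq \vec u^\star \leq \vec u$, which is exactly the claim.

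As an alternative that I would at most mention, one can argue via Kleene iteration: polynomials with non-negative coefficients are Scott-continuous on $\NonNegRealsInfty^n$ (where $0 \cdot \infty = 0$ per the paper's conventions), so $\lfp \system f = \sup_{k \in \Nats} \system f^k(\vec 0)$. A routine induction on $k$, using $\vec 0 \leq \vec u$ and monotonicity together with $\system f(\vec u) \leq \vec u$, shows $\system f^k(\vec 0) \leq \vec u$ for every $k$, and taking the supremum gives $\lfp \system f \leq \vec u$.

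I do not anticipate a real obstacle: the only subtle point is ensuring the ambient lattice structure behaves well at $\infty$, which the preliminaries have already spelled out. I would therefore keep the proof to a few lines, citing Knaster--Tarski explicitly, and avoid introducing continuity unnecessarily since the first route needs only monotonicity.
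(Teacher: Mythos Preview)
Your proposal is correct and aligns with the paper, which simply states the lemma as an immediate consequence of the Knaster--Tarski theorem without further elaboration. If anything, you could shorten your main route by invoking the standard formulation that $\lfp \system f = \inf\{\vec u \mid \system f(\vec u) \leq \vec u\}$, which makes the claim a one-liner; your down-set restriction argument is a correct proof of exactly this characterization, just slightly more verbose than needed.
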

If $\system f$ is feasible, then $\lfp\system f$ is obviously an inductive upper bound.
The problem is that $\lfp\system f$ may be irrational even if $\system f$ has rational coefficients only (see \Cref{ex:ppsBasics} below) and can thus not easily be represented exactly.
In \Cref{sec:theoryMain} we show under which conditions there exist \emph{rational} inductive upper bounds $\vec u \in \Rats_{\geq 0}^n$. 

\vspace{1mm}
\noindent\begin{tikzpicture}
\node[rectangle,rounded corners,draw=black!50,very thick,minimum width=\textwidth, inner sep=2.5mm, align=center] (box) {
    \vspace*{5mm}
    \\
    Given a feasible PPS $\system f$, find a rational \emph{inductive} upper bound $\vec u \geq \lfp\system{f}$.
};
\node[rectangle,fill=white,above=-2mm of box,xshift=-25.5mm,yshift=-1mm] {\textbf{Problem statement of this paper}};
\end{tikzpicture}

A PPS is called \emph{clean} if $\lfp \system f \allgreater \vec{0}$.
Every PPS can be cleaned in linear time by identifying and removing the variables that are assigned 0 in the lfp~\cite{DBLP:journals/jacm/EtessamiY09,DBLP:journals/siamcomp/EsparzaKL10}.

Given a PPS $\system{f}$ and a point $\vec{u} \in \NonNegReals^n$, we define the \emph{Jacobi matrix} of $\system f$ at $\vec u$ as the $n {\times} n$-matrix $\jac{\system f}(\vec u)$ with coefficients
$
    \jac{\system f}(\vec u)_{1 \leq i,j \leq n}
    =
    \frac{\partial}{\partial x_j} f_i(\vec u)
$.

\newcommand{\fExample}{\system f_{ex}}
\newcommand{\fExampleModified}{\system{\tilde{f}}_{\mathit{ex}}}
\begin{example}
    \label{ex:ppsBasics}
    Consider the example PPS $\fExample$ with variables $\vec{x} = (x, y)^T$:
    \begin{align*}
        x ~=~ f_1(x,y) ~=~ y + 0.1 
        && y ~=~ f_2(x,y) ~=~ 0.2x^2 + 0.8 xy + 0.1 ~.
    \end{align*}
    The line and the hyperbola defined by these equations are depicted in \Cref{fig:ppsIllustration} on \Cpageref{fig:ppsIllustration}.
    The fixpoints of $\fExample$ are the intersections of these geometric objects; in this case there are two.
    In particular, $\fExample$ is feasible and its lfp is
    \[
        \lfp \fExample
        ~=~
        \big(\, (27 {-} \sqrt{229}) / 50 \,,\, (22 {-} \sqrt{229}) / 50 \,\big)^T
        ~\approx~
        (0.237 \,,\, 0.137)^T
        ~.
    \]
    Therefore, $\fExample$ is clean as $\lfp \fExample \allgreater \vec 0$.
    The Jacobi matrix of $\fExample$ is 
    \[
        \jac{\fExample}(x,y)
        ~=~
        \begin{pmatrix}
            \frac{\partial}{\partial x}f_1 \,&\, \frac{\partial}{\partial y}f_1 \\
            \frac{\partial}{\partial x}f_2 \,&\, \frac{\partial}{\partial y}f_2
        \end{pmatrix}
        ~=~
        \begin{pmatrix}
            0 \,&\, 1 \\
            0.4x + 0.8y \,&\, 0.8x
        \end{pmatrix}
        ~.
    \]
    Note that the lfp $\lfp\fExample$ contains \emph{irrational} numbers.
    In the above example, these irrational numbers could still be represented using square roots because the fixpoints of $\fExample$ are the zeros of a quadratic polynomial.
    However, there are PPS whose lfp \emph{cannot} be expressed using radicals, i.e., square roots and cubic roots, etc.~\cite{DBLP:conf/stacs/EtessamiY05}.
    This means that in general, there is no easy way to compute the lfp exactly.
    It is thus desirable to provide bounds, which we do in this paper.
    \qedexample
\end{example}

\paragraph{Matrices and Eigenvectors.}
Let $M$ be a real $n{\times} n$-matrix.
We say that $M$ is \emph{non-negative} (in symbols: $M \geq 0$) if it has no negative entries.
$M$ is called \emph{irreducible} if for all $1 \leq i,j \leq n$ there exists $0 \leq k < n$ such that $(M^k)_{i,j} \neq 0$.
It is known that $M$ is irreducible iff the directed graph $G_M = (\{1,\ldots,n\}, E)$ with $(i,j) \in E$ iff $M_{i,j} \neq 0$ is strongly connected.
A \emph{maximal irreducible submatrix} of $M$ is a square submatrix induced by a strongly connected component of $G_M$.
The \emph{period} of a strongly connected $M$ is the length of the shortest cycle in $G_M$.
It is instructive to note that PPS $\vec x = \system f (\vec x)$ are generalizations of linear equation systems of the form $\vec x = M \vec x + \vec c$, with $M \geq 0$ and $\vec c \geq \vec 0$.
Moreover, note that for any PPS $\system f$ it holds that $\jac{\system f}(\vec u) \geq 0$ for all $\vec u \allgreater \vec 0$.

An \emph{eigenvector} of an $n {\times} n$-matrix $M$ with eigenvalue $\lambda \in \mathbb{C}$ is a (complex) vector $\vec v \neq \vec 0$ satisfying $M \vec v = \lambda \vec v$.
There are at most $n$ different eigenvalues.
The \emph{spectral radius} $\specRad{M} \in \NonNegReals$ is the largest absolute value of the eigenvalues of $M$.
The following is a fundamental theorem about non-negative matrices:

\begin{theorem}[{Perron-Frobenius; e.g.~\cite{nonnegmats}}]
    \label{thm:perronFrobenius}
    Let $M \geq 0$ be irreducible.
    \begin{enumerate}[label={(\arabic*)}]
        \item $M$ has a strictly positive eigenvector $\vec{v} \allgreater \vec{0}$ with eigenvalue $\specRad{M}$, the spectral radius of $M$, and all other eigenvectors $\vec{v'} \allgreater \vec{0}$ are scalar multiples of $\vec{v}$.
        \item The eigenvalues of $M$ with absolute value $\specRad{M}$ are exactly the $h$ numbers $\specRad{M}, \xi \specRad{M}, \ldots, \xi^{h-1} \specRad{M}$, where $\xi$ is a primitive $h$th root of unity.
    \end{enumerate}
\end{theorem}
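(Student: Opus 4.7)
The plan is to prove part (1) via Brouwer's fixed-point theorem plus a perturbation argument to obtain a nonnegative eigenvector, then use irreducibility to upgrade to strict positivity, and finally invoke a Collatz--Wielandt style comparison to identify the eigenvalue with $\specRad{M}$ and establish uniqueness. Part (2) requires a separate rigidity analysis of the \emph{peripheral spectrum} $\{\mu \in \mathbb{C} : |\mu| = \specRad{M}\}$.

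For existence, first handle the case where $M$ is entrywise strictly positive: on the compact standard simplex $S = \{\vec v \geq \vec 0 : \sum_i v_i = 1\}$, the map $T(\vec v) = M\vec v / \sum_i (M\vec v)_i$ is a continuous self-map, and Brouwer yields a fixed point $\vec v$ with $M\vec v = \lambda \vec v$ for some $\lambda > 0$. For a general irreducible $M \geq 0$, apply this to the perturbations $M_\eps = M + \eps J$ (with $J$ the all-ones matrix), extract Perron pairs $(\vec v_\eps, \lambda_\eps)$, and pass to a subsequential limit $(\vec v, \lambda)$ as $\eps \to 0$ using compactness of $S$. Irreducibility supplies the crucial upgrade to $\vec v \allgreater \vec 0$: the matrix $(\idMat + M)^{n-1}$ is entrywise strictly positive, because strong connectedness of $G_M$ yields a directed path of length $\leq n-1$ between every pair of indices. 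Combined with $(\idMat + M)^{n-1}\vec v = (1+\lambda)^{n-1}\vec v$, this forces $\vec v \allgreater \vec 0$ whenever $\vec v \neq \vec 0$.

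To identify $\lambda = \specRad{M}$ and establish uniqueness, invoke the Collatz--Wielandt functional $r(\vec x) = \min_{i : x_i > 0} (M\vec x)_i / x_i$. For any eigenvalue $\mu$ with eigenvector $\vec w$, componentwise absolute values give $M|\vec w| \geq |\mu|\,|\vec w|$ and hence $r(|\vec w|) \geq |\mu|$; pairing this with the strictly positive left Perron eigenvector of $M^T$ (which exists by the same construction applied to $M^T$, also irreducible) yields $|\mu| \leq \lambda$, so $\lambda = \specRad{M}$. For uniqueness, given another positive eigenvector $\vec v' \allgreater \vec 0$, the bound forces its eigenvalue to equal $\lambda$; letting $c = \min_i v_i / v'_i$, the difference $\vec w = \vec v - c\vec v'$ is a nonnegative eigenvector with at least one zero entry, and the $(\idMat + M)^{n-1}$ argument forces $\vec w = \vec 0$, i.e., $\vec v$ and $\vec v'$ are parallel.

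For part (2), consider an eigenvalue $\lambda e^{i\theta}$ in the peripheral spectrum with eigenvector $\vec w$. The equality case of the triangle inequality $|M\vec w| \leq M|\vec w|$ forces $|\vec w|$ to be a Perron eigenvector and $\vec w = D\,|\vec w|$ for a diagonal unitary $D$; this yields the similarity $M = e^{-i\theta}\, D\, M\, D^{-1}$. The set of unimodular phases $e^{i\theta}$ arising this way is then closed under multiplication and inversion, and being a finite subgroup of the unit circle (since $M$ has only finitely many eigenvalues), it must equal the group of $h$-th roots of unity for some $h \geq 1$. The main obstacle will be this rigidity step: carefully propagating the phases of the components of $\vec w$ along directed paths in $G_M$ to actually extract the diagonal $D$, and then identifying the algebraic integer $h$ with an explicit combinatorial invariant of $G_M$ consistent with the paper's notion of period.
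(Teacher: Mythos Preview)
The paper does not prove \Cref{thm:perronFrobenius}; it is stated as a classical result with a citation to the literature (the reference \texttt{nonnegmats}) and is used as a black box throughout. There is therefore no ``paper's own proof'' to compare against.

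That said, your outline is a correct and standard route to Perron--Frobenius. The Brouwer-on-the-simplex argument for strictly positive matrices, the perturbation $M_\eps = M + \eps J$ with a compactness limit, the $(\idMat + M)^{n-1}$ strict-positivity upgrade, and the Collatz--Wielandt identification of $\lambda$ with $\specRad{M}$ are all textbook steps. One small point to tighten in part~(2): from $M\vec w = \specRad{M}e^{i\theta}\vec w$ and the equality $M|\vec w| = \specRad{M}\,|\vec w|$ you deduce that for every edge $(i,j)$ of $G_M$ the phases satisfy $\arg(w_j) = \arg(w_i) + \theta$; summing around any directed cycle of length $\ell$ forces $\ell\theta \in 2\pi\mathbb{Z}$, so $\theta$ is a multiple of $2\pi/h$ where $h$ is the gcd of all cycle lengths. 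This both gives the diagonal $D$ (via $D_{jj} = w_j/|w_j|$) and identifies $h$ with the period of $G_M$ as defined in the paper, which your last paragraph flags as the remaining work.
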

The \emph{unique} eigenvector $\vec v \allgreater \vec 0$ with $\maxnorm{\vec v} = 1$ of an irreducible non-negative matrix $M$ is called the \emph{Perron-Frobenius} eigenvector of $M$.

\paragraph{Strongly Connected Components.}
To each PPS $\system f$ we associate a finite directed graph $\graphpps{\system f} = (\{x_1,\ldots,x_n\}, E)$, which, intuitively speaking, captures the dependency structure among the variables.
Formally, $(x_i, x_j) \in E$ if the polynomial $f_i$ \emph{depends} on $x_j$, i.e., $x_j$ appears in at least one term of $f_i$ with a non-zero coefficient.
This is equivalent to saying that the \emph{partial derivative} $\frac{\partial}{\partial x_j} f_i$ is not the zero polynomial.
We say that $\system{f}$ is \emph{strongly connected} if $\graphpps{\system f}$ is strongly connected, i.e., for each pair $(x_i, x_j)$ of variables, there exists a path from $x_i$ to $x_j$ in $\graphpps{\system f}$.
For instance, $\fExample$ from \Cref{ex:ppsBasics} is strongly connected because the dependency graph has the edges $E = \{(x,y), (y,x), (y,y)\}$.
Strong connectivity of PPS is a generalization of irreducibility of matrices; indeed, a matrix $M$ is irreducible iff the PPS $\vec x = M \vec x$ is strongly connected.
We often use the fact that $\jac{\system f}(\vec u)$ for $\vec u \allgreater \vec 0$ is irreducible iff $\system f$ is strongly connected.

PPS are usually analyzed in a decomposed fashion by considering the sub-systems induced by the \emph{strongly connected components} (SCCs) of $\graphpps{\system f}$ in bottom-up order~\cite{DBLP:conf/stacs/EtessamiY05}.
Here we also follow this approach and therefore focus on strongly connected PPS.
The following was proved in~\cite[Lem.~6.5]{DBLP:journals/jacm/EtessamiY09} and later generalized in~\cite[Thm.~4.1]{DBLP:journals/siamcomp/EsparzaKL10} (also see remark below~\cite[Prop.~5.4]{DBLP:journals/siamcomp/EsparzaKL10} and~\cite[Lem.~8.2]{DBLP:journals/jacm/EtessamiY09}):

\begin{theorem}[{\cite{DBLP:journals/jacm/EtessamiY09,DBLP:journals/siamcomp/EsparzaKL10}}]
    \label{thm:specRadBounded}
    If $\system f$ is feasible, strongly connected and clean, then for all $\vec{u} < \lfp \system f$, we have $\specRad{\jac{\system f}(\vec{u})} < 1$.
    As a consequence, $\specRad{\jac{\system f}(\lfp\system f)} \leq 1$.
\end{theorem}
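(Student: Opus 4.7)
Our plan is to prove the first claim via a combination of Perron--Frobenius applied to the irreducible non-negative Jacobian and an exact Taylor expansion of $\system f$ whose higher-order remainder is entry-wise non-negative; the second claim then follows from the first by a continuity argument.

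I would first reduce to the strict case $\vec 0 \allsmaller \vec u \allsmaller \lfp\system f$. Indeed, $\jac{\system f}$ is itself a matrix of polynomials with non-negative coefficients, hence entry-wise monotone in its argument, and the spectral radius of a non-negative matrix is monotone in the entries. Combined with cleanness ($\lfp\system f \allgreater \vec 0$) and strong connectivity, this makes it possible to replace a general $\vec u < \lfp\system f$ by a strictly-interior dominator, handling the delicate case that some $u_i$ already equals $(\lfp\system f)_i$ by first applying $\system f$ a few times so that strong connectivity strictly separates all coordinates from $\lfp\system f$. Under the strict assumption, $J := \jac{\system f}(\vec u)$ is irreducible non-negative, and \Cref{thm:perronFrobenius} yields a strictly positive left Perron eigenvector $\vec w^T \allgreater \vec 0$ with $\vec w^T J = \rho\,\vec w^T$ for $\rho := \specRad{J}$. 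Because every $f_i$ is a polynomial with non-negative coefficients, Taylor's theorem at $\vec u$ is exact and yields
\[
    \lfp\system f ~=~ \system f(\lfp\system f) ~=~ \system f(\vec u) + J\vec y + \vec R,
    \qquad \vec y := \lfp\system f - \vec u \allgreater \vec 0, \quad \vec R \geq \vec 0,
\]
where $\vec R$ collects the non-negative higher-order Taylor terms. Left-multiplying by $\vec w^T$ turns this into the scalar identity $\vec w^T(\lfp\system f - \system f(\vec u)) = \rho\cdot\vec w^T\vec y + \vec w^T\vec R$, which I then analyze under the contradiction hypothesis $\rho \geq 1$.

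In the purely linear case ($\vec R = \vec 0$, so $\system f(\vec x) = M\vec x + \vec c$ with $M = J$), a direct rearrangement of the scalar identity forces $\vec w^T \vec c = (1-\rho)\,\vec w^T \lfp\system f \leq 0$, hence $\vec c = \vec 0$ (since $\vec w \allgreater \vec 0$ and $\vec c \geq \vec 0$), which makes $\lfp\system f = \vec 0$ and contradicts cleanness. In the genuinely non-linear case the presence of at least one monomial of degree $\geq 2$ together with strict positivity of $\vec w$ and $\vec y$ already gives $\vec w^T \vec R > 0$, and a bookkeeping argument analogous to the linear case---again combining cleanness and strict positivity of the Perron eigenvector---produces the contradiction. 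The second claim $\specRad{\jac{\system f}(\lfp\system f)} \leq 1$ then follows by taking any sequence $\vec u_k \allsmaller \lfp\system f$ with $\vec u_k \to \lfp\system f$ and invoking continuity of the spectral radius in the matrix entries. The main obstacle is clearly the non-linear case: unlike in the linear case, we cannot simply assume $\vec u$ is a post-fixed point of $\system f$, so the final contradiction requires carefully combining strong connectivity, non-linearity, and cleanness to extract a usable strict sign from the $\vec w$-weighted Taylor remainder---which is precisely where all three hypotheses of the theorem are used essentially.
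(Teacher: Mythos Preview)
The paper does not give its own proof of \Cref{thm:specRadBounded}; the result is quoted from \cite{DBLP:journals/jacm/EtessamiY09,DBLP:journals/siamcomp/EsparzaKL10}. So there is no in-paper argument to compare your approach against directly.

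Regarding correctness: your linear case and the final continuity step are fine, but the non-linear case has a real gap. Left-multiplying the Taylor identity $\lfp\system f = \system f(\vec u) + J\vec y + \vec R$ by $\vec w^T$ produces only the $\vec w$-weighted version of that same identity; it carries no information beyond it. Under the hypothesis $\rho \geq 1$ you can rearrange to $\vec w^T(\vec u - \system f(\vec u)) \geq \vec w^T\vec R > 0$, but this is \emph{not} a contradiction: nothing in your setup forces $\system f(\vec u) \geq \vec u$, so $\vec w^T\system f(\vec u) < \vec w^T\vec u$ is perfectly consistent with all hypotheses. Your linear argument succeeds for a different reason---there $J = M$ is constant, hence $\vec w$ is simultaneously a left eigenvector at $\lfp\system f$, and you exploit the fixed-point equation $\lfp\system f = M\,\lfp\system f + \vec c$ directly. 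In the genuinely non-linear case $\jac{\system f}(\vec u) \neq \jac{\system f}(\lfp\system f)$ and that shortcut disappears; the promised ``bookkeeping argument analogous to the linear case'' does not materialize.

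The standard route in the cited references is to first treat the case where $\vec u$ is a Kleene iterate $\vec k_i = \system f^i(\vec 0)$, for which the post-fixed-point property $\system f(\vec k_i) \geq \vec k_i$ comes for free. Then Taylor gives $\jac{\system f}(\vec k_i)(\lfp\system f - \vec k_i) \leq \lfp\system f - \vec k_{i+1} < \lfp\system f - \vec k_i$, and irreducibility yields $\specRad{\jac{\system f}(\vec k_i)} < 1$ via the subinvariance form of Perron--Frobenius (the same device the paper itself uses in the appendix for \Cref{thm:existsInductiveChar}, direction \ref{it:inductiveNotFp}$\Rightarrow$\ref{it:specrad}). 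The general $\vec u \allsmaller \lfp\system f$ then follows by entry-wise monotonicity of $\jac{\system f}$ and of the spectral radius, since $\vec u \leq \vec k_i$ for $i$ large enough. Your reduction of the boundary case $u_i = (\lfp\system f)_i$ via ``applying $\system f$ a few times'' is likewise only sketched and would need a separate argument.
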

\Cref{thm:specRadBounded} partitions all PPS $\system f$ which satisfy its precondition into two classes:
Either (1) $\specRad{\jac{\system f}(\lfp\system f)} < 1$, or (2) $\specRad{\jac{\system f}(\lfp\system f)} = 1$.
In the next section we show that $\system f$ admits non-trivial inductive upper bounds iff it is in class (1).

\begin{example}
    Reconsider the PPS $\fExample$ from \Cref{ex:ppsBasics}.
    It can be shown that $\vec v = (1, \lambda_1)^T$ where $\lambda_1 \approx 0.557$ is an eigenvector of $\jac{\fExample}(\lfp \fExample)$ with eigenvalue $\lambda_1$.
    Thus by the Perron-Frobenius Theorem, $\specRad{\jac{\fExample}(\lfp \fExample)} = \lambda_1 < 1$.
    As promised, there exist inductive upper bounds as can be seen in \Cref{fig:ppsIllustration}.
    \qedexample
\end{example}

\section{Finding Inductive Upper Bounds in PPS}
\label{sec:theoryMain}

In this section, we are concerned with the following problem:
Given a feasible, clean, and strongly connected PPS $\system f$, find a vector $\vec 0 \allsmaller \vec u \allsmaller \vec \infty$ such that $\system f(\vec u) \leq \vec u$, i.e., an inductive upper bound on the lfp of $\system f$ (see~\Cref{thm:induction}).

\subsection{Existence of Inductive Upper Bounds}

\begin{figure}[t]
    \centering
    \begin{tikzpicture}[scale=0.6]
    \begin{axis}[
    axis lines = middle,
    xlabel = {$x$},
    ylabel = {$y$},
    xmin=0.05, xmax=0.9,
    ymin=0.05, ymax=0.9,
    legend cell align={left},
    legend style={
        at={(3.75,1.25)},
        font=\scriptsize,
        align=left
    }]
    
    \addplot [name path = B,
    red,
    domain = -0.05:1.0] {x - 0.1};
    \addlegendentry{$x = y + 0.1$}
    
    \addplot [name path = A,
    blue,
    domain = -0.05:0.86,
    samples = 100] {(2*x^2 + 1) / (10 - 8*x)};
    \addlegendentry{$y=0.2x^2 + 0.8 xy + 0.1$}

    \coordinate (lfp) at (0.237, 0.137);
    \coordinate (ev) at (1, 0.557);
    \coordinate (evend) at ($ (lfp) + 0.3*(ev) $);
    \coordinate[xshift=3mm] (cone1) at (evend);
    \coordinate[xshift=3mm,yshift=3mm] (cone2) at (evend);
    \coordinate[yshift=3mm] (cone3) at (evend);
    \node[fill=black,circle,scale=0.3,label={above,xshift=-2mm,yshift=-1mm:$\lfp\fExample$}] (lfp) at (lfp) {};
    
    \fill[black!20] (lfp) -- (cone1) -- (cone2) -- (cone3) -- (lfp);
    \draw[densely dotted] (evend) -- (cone1);
    \draw[densely dotted] (evend) -- (cone3);
    \draw [decorate,decoration={brace,amplitude=2pt}]
    (cone2) -- (cone1) node[right,midway,yshift=2pt] {$\eps$};
    
    \draw[->] (lfp) -- node[above,near end] {$\vec v$} (evend);

    \addplot [name path = C,
    blue,
    dashed,
    domain = -0.05:0.8,
    samples = 100] {(2*x^2 + 1.916) / (10 - 8*x)};
    \addlegendentry{$y=0.2x^2 + 0.8 xy + 0.1916$}
    
    \coordinate (lfp2) at (0.54, 0.44);
    \node[fill=black,circle,scale=0.3,label={above,xshift=-2mm:$\lfp\fExampleModified$}] (lfp2) at (lfp2) {};

    \addplot [teal!10] fill between [of = A and B, soft clip={domain=0.237:0.85}];
    
    \end{axis}
    \end{tikzpicture}
    \caption{
        The PPS $\fExample$ corresponds to the solid red line and the solid blue curve.
        Its inductive upper bounds form the shaded area above the lfp $\lfp\fExample$.
        \Cref{thm:existsInductiveChar}\ref{it:cone} ensures that one can fit the gray \enquote{cone} pointing in direction of the Perron-Frobenius eigenvector $\vec v$ inside the inductive region.
        The PPS $\fExampleModified$ which comprises the dashed curve and the solid line does not have any non-trivial inductive upper bounds.
        Note that the tangent lines at $\lfp\fExampleModified$ are parallel to each other.
    }
    \label{fig:ppsIllustration}
\end{figure}

An important first observation is that \emph{inductive upper bounds other than the exact lfp do not necessarily exist.}
As a simple counter-example consider the 1-dimensional PPS $x = \frac{1}{2}x^2 + \frac{1}{2}$.
If $u$ is an inductive upper bound, then
\[
    \frac{1}{2}u^2 + \frac{1}{2} \leq u
    ~\implies~
    u^2 - 2u + 1 \leq 0
    ~\implies~
    (u-1)^2 \leq 0
    ~\implies~
    u=1 ~,
\]
and thus the only inductive upper bound is the exact lfp $u=1$.
Another example is the PPS $\fExampleModified$ from \Cref{fig:ppsIllustration}.
What these examples have in common is the following property:
Their derivative evaluated \emph{at the lfp} is not invertible.
Indeed, we have $\frac{\partial}{\partial x} (\frac{1}{2}x^2 + \frac{1}{2} - x) = x - 1$, and inserting the lfp $x=1$ yields zero.
The higher dimensional generalization of this property to arbitrary PPS $\system f$ is that the Jacobi matrix of the function $\system f - \vec x$ evaluated at $\lfp \system f$ is singular; note that this is precisely the matrix $\jac{\system f}(\lfp\system f) - \idMat$.
Geometrically, this means that the tangent lines at $\lfp\system f$ are parallel, as can be seen in~\Cref{fig:ppsIllustration} for the example PPS $\fExampleModified$.
It should be intuitively clear from the figure that \emph{inductive upper bounds only exist if the tangent lines are not parallel}.
The next lemma makes this more precise:

\begin{restatable}[Existence of inductive upper bounds]{lemma}{existsInductiveChar}
    \label{thm:existsInductiveChar}
    Let $\system{f}$ be a feasible, clean, and strongly connected PPS.
    Then the following are equivalent:
    \begin{enumerate}[label={(\arabic*)}]
        \item\label{it:nonsing} The matrix $\idMat - \jac{\system f} (\lfp \system{f})$ is non-singular.
        \item\label{it:specrad} The spectral radius of $\jac{\system f} (\lfp \system{f})$ satisfies $\specRad{\jac{\system f} (\lfp \system{f})} < 1$.
        \item\label{it:inductiveNotFp} There exists $\vec 0 \allsmaller \vec u \allsmaller \vec \infty$ s.t. $\system f (\vec u) < \vec{u}$ (i.e. $\vec u$ is inductive but not a fixpoint).
        \item\label{it:cone} The matrix $\jac{\system f} (\lfp \system{f})$ has a unique (normalized) eigenvector $\vec{v} \allgreater \vec{0}$ and there exist numbers $\delta_{max} > 0$ and $\eps > 0$ s.t.
        \[
            \system{f}(\, \lfp\system f + \delta \cdot \vec{\tilde v} \,)
            \quad \allsmaller \quad
            \lfp\system f + \delta \cdot \vec{\tilde v}
        \]
        holds for all $0< \delta \leq \delta_{max}$ and vectors $\vec{\tilde v} \geq \vec{v}$ with $\maxnorm{\vec{v} - \vec{\tilde v} } \leq \eps$.
    \end{enumerate}
\end{restatable}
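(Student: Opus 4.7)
The plan is to close the cycle $(1) \Leftrightarrow (2) \Rightarrow (4) \Rightarrow (3) \Rightarrow (2)$. Throughout, write $J := \jac{\system f}(\lfp\system f)$; since $\system f$ is clean, $\lfp\system f \allgreater \vec 0$, and combined with strong connectivity this makes $J$ non-negative and irreducible, so the Perron-Frobenius theorem applies to both right and left eigenvectors.

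Two of the implications are essentially free. For $(1) \Leftrightarrow (2)$, \Cref{thm:specRadBounded} gives $\specRad{J} \leq 1$, and by Perron-Frobenius $\specRad{J}$ is itself an eigenvalue of $J$; hence $\specRad{J} = 1$ iff $1$ is an eigenvalue iff $\idMat - J$ is singular, and the equivalence is the contrapositive. The implication $(4) \Rightarrow (3)$ is immediate: pick any $0 < \delta \leq \delta_{max}$ and take $\vec u := \lfp\system f + \delta \vec v$.

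For $(2) \Rightarrow (4)$, Perron-Frobenius supplies the unique normalized positive eigenvector $\vec v \allgreater \vec 0$ with $J\vec v = \specRad{J}\cdot \vec v$. Since each $f_i$ is a polynomial, a Taylor expansion of $\system f(\lfp\system f + \delta \vec{\tilde v})$ around $\lfp\system f$ yields $\lfp\system f + \delta J \vec{\tilde v} + \delta^2 R(\delta, \vec{\tilde v})$ with remainder $R$ uniformly bounded for $\vec{\tilde v}$ in a small neighbourhood of $\vec v$ and $\delta$ in a bounded interval. The strict inductivity inequality is equivalent to $(\idMat - J)\vec{\tilde v} \allgreater \delta R(\delta, \vec{\tilde v})$. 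At $\vec{\tilde v} = \vec v$ the left-hand side equals $(1 - \specRad{J})\vec v \allgreater \vec 0$ by $\specRad{J} < 1$ and $\vec v \allgreater \vec 0$, so by continuity strict componentwise positivity survives on a one-sided $\eps$-neighbourhood $\vec{\tilde v} \geq \vec v$, $\maxnorm{\vec{\tilde v} - \vec v} \leq \eps$; shrinking $\delta_{max}$ then dominates the $\delta R$ perturbation componentwise.

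The main obstacle is $(3) \Rightarrow (2)$, for which the key trick is to dualize via the \emph{left} Perron-Frobenius eigenvector together with the super-linearity of PPS. Assume for contradiction that $\specRad{J} = 1$; then Perron-Frobenius also supplies a strictly positive row vector $\vec\pi \allgreater \vec 0$ with $\vec\pi^T J = \vec\pi^T$. For the given strict inductive bound $\vec u$, set $\vec w := \vec u - \lfp\system f \geq \vec 0$ (non-negativity via \Cref{thm:induction}). Because each $f_i$ has non-negative coefficients and $\lfp\system f, \vec w \geq \vec 0$, every higher-order term in the Taylor expansion of $\system f$ at $\lfp\system f$ in direction $\vec w$ is pointwise non-negative, so $\system f(\vec u) \geq \lfp\system f + J\vec w$. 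Combined with $\system f(\vec u) < \lfp\system f + \vec w$ this yields $(\idMat - J)\vec w > \vec 0$, i.e. componentwise $\geq \vec 0$ with at least one strict inequality. Contracting with $\vec\pi^T \allgreater \vec 0$ gives the contradiction $0 = \vec\pi^T (\idMat - J) \vec w > 0$, completing the proof.
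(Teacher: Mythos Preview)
Your argument is correct. The equivalence $(1)\Leftrightarrow(2)$, the trivial step $(4)\Rightarrow(3)$, and the Taylor-plus-continuity outline for $(2)\Rightarrow(4)$ all match the paper's proof in spirit; your $(2)\Rightarrow(4)$ is a compressed sketch of what the paper carries out with explicit quantitative bounds on $\eps$ and $\delta_{\max}$, but the underlying mechanism (Taylor expansion, $(I-J)\vec v = (1-\specRad{J})\vec v \allgreater \vec 0$, then uniform continuity and a compactness argument) is the same.

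The genuine divergence is in $(3)\Rightarrow(2)$. The paper, like you, first extracts $J\vec w < \vec w$ from Taylor's theorem and the non-negativity of the remainder, but then proves a stand-alone lemma: for irreducible $M\ge 0$, if $M\vec u < \vec u$ with $\vec u > \vec 0$, then $\vec u \allgreater \vec 0$, $M^n\vec u \allsmaller \vec u$, and hence $M^k \to 0$, forcing $\specRad{M} < 1$. Your route via the \emph{left} Perron--Frobenius eigenvector is shorter and arguably more transparent: the single line $0 = \vec\pi^T(I-J)\vec w > 0$ replaces the entire power-convergence argument. What the paper's approach buys is self-containment (it does not appeal to the left eigenvector or even to the fact that $\specRad{J}\le 1$ in that step), whereas your approach buys brevity at the cost of invoking Perron--Frobenius once more. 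One small omission: you assume $\specRad{J}=1$ rather than $\specRad{J}\ge 1$ when negating $(2)$; this is harmless since you already have $\specRad{J}\le 1$ from \Cref{thm:specRadBounded}, but it would be cleaner to say so explicitly (or to note that the same contradiction goes through with $\vec\pi^T(I-J) = (1-\specRad{J})\vec\pi^T \le \vec 0$ whenever $\specRad{J}\ge 1$).
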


The proof of \Cref{thm:existsInductiveChar} (see \iftoggle{arxiv}{\Cref{proof:existsInductiveChar}}{\cite{arxiv}}) relies on a linear approximation of $\system f$ via Taylor's familiar theorem as well as \Cref{thm:perronFrobenius,thm:specRadBounded}.
Condition \ref{it:cone} of \Cref{thm:existsInductiveChar} means that there exists a \enquote{truncated cone}
\[
    Cone(\lfp\system f, \vec{v}, \eps, \delta_{max})
    ~=~
    \{\, \lfp{\system f} + \delta \vec{\tilde v} \mid 0 \leq \delta \leq \delta_{max}, \vec{\tilde v} \geq \vec{v}, \maxnorm{\vec{\tilde v} - \vec{v}} \leq \eps \,\}
\] which is entirely contained in the inductive region.
The \enquote{tip} of the cone is located at the lfp $\lfp\system f$ and, the cone points in the direction of the Perron-Frobenius eigenvector $\vec v$, as illustrated in \Cref{fig:ppsIllustration} (assuming $\delta_{max} = 1$ for simplicity).
The length $\delta_{max} > 0$ and the radius $\eps > 0$ of the cone depend on $\specRad{\jac{\system f} (\lfp \system{f})}$, but for us it suffices that they are non-zero.
Note that this cone has non-empty interior and thus contains rational-valued vectors.
The idea of our Optimistic Value Iteration is to \emph{construct a sequence of guesses that eventually hits this cone}.

\subsection{The Optimistic Value Iteration Algorithm}

\newcommand{\tol}{\tau} 
\newcommand{\numIters}{N}
\newcommand{\dampEvScale}{d}
\newcommand{\dampTol}{c}
\newcommand{\currLower}{\vec{l}}
\newcommand{\newLower}{\vec{l'}}
\newcommand{\candidate}{\vec{u}}
\newcommand{\ev}{\vec{v}}
\SetKwInOut{Input}{input}
\SetKwInOut{Output}{output}
\SetKwInOut{Termination}{termination}
\SetKwData{True}{true}
\SetKwData{From}{from}
\SetKwData{To}{to}
\SetKwFunction{Update}{{improveLowerBound}}
\SetKwFunction{Converged}{{converged}}
\SetKwFunction{ApproxEV}{{approxEigenvec}}
\SetKwComment{Comment}{/* }{ */}

The basic idea of Optimistic Value Iteration (OVI) can be applied to monotone functions of the form $\vec \phi \colon \NonNegReals^n \to \NonNegReals^n$ (in \cite{ovi}, $\vec \phi$ is the Bellman operator of an MDP).
Kleene's fixpoint theorem suggests a simple method for approximating the lfp $\lfp \vec\phi$ \emph{from below}:
Simply iterate $\vec\phi$ starting at $\vec 0$, i.e., compute the sequence $\vec l_0 = \vec 0$, $\vec l_1 = \vec\phi(\vec l_0)$, $\vec l_2 = \vec\phi(\vec l_1)$, etc.\footnote{In order for the Kleene seqence to converge to the lfp, i.e., $\lim_{k \to \infty} \vec l_k = \lfp\phi$, it suffices that $\vec\phi$ is $\omega$-continuous. This already implies monotonicity.}
In the context of MDP, this iterative scheme is known as \emph{Value Iteration} (VI).
VI is easy to implement, but it is difficult to decide when to stop the iteration.
In particular, standard stopping criteria such as small absolute difference of consecutive approximations are formally unsound~\cite{DBLP:conf/rp/HaddadM14}.
OVI and other algorithms~\cite{DBLP:conf/cav/Baier0L0W17,DBLP:conf/cav/QuatmannK18} cope with this problem by computing not only a lower but also an \emph{upper} bound on $\lfp\vec\phi$.
In the case of OVI, an upper bound with absolute error $\leq \eps$ is obtained as follows (we omit some details):
\begin{enumerate}[label={(\arabic*)}]
    \item Compute $\vec l_k \leq \lfp\vec\phi$ such that $\maxnorm{\vec l_k - \vec l_{k-1}} \leq \tol$, for some (small) $\tol > 0$.
    \item Guess a candidate upper bound $\vec u = \vec l_k + \vec\eps$.
    \begin{enumerate}
        \item If $\vec\phi(\vec u) \leq \vec u$ holds, i.e., $\vec u$ is inductive, then return $\vec u$.
        \item If not, \emph{refine} $\vec u$ (see~\cite{ovi} for details). If the refined $\vec u$ is still not inductive, then go back to step (1) and try again with $0 < \tol ' < \tol$.
    \end{enumerate}
\end{enumerate}

We present our variant of OVI for PPS as \Cref{alg:ovi}.
The main differences to the above scheme are that (i) we do not insist on Kleene iteration for obtaining the lower bounds $\vec l$, and (ii) we approximate the eigenvector $\vec v$ from condition~\ref{it:cone} of \Cref{thm:existsInductiveChar} and compute the \enquote{more informed} guesses $\vec u = \vec l + \eps \vec v $, for various $\eps$.
\emph{Refining the guesses} as original OVI does \emph{is not necessary} (but see our remarks in \Cref{sec:implDetail} regarding floating point computations).

\begin{algorithm}[t]
    \caption{Optimistic Value Iteration (OVI) for PPS}
    \label{alg:ovi}
    
    \Input{strongly connected clean PPS $\system{f}$; maximum abs.\ error $\eps \in \Rats_{> 0}$}
    \Output{a pair $(\currLower,\candidate)$ of rational vectors s.t. $\currLower \leq \lfp\system{f}$, $\system{f}(\candidate) \leq \candidate$ (hence $\lfp\system f \leq \vec u$), and $\maxnorm{\vec{l} - \vec{u}} \leq \eps$}
    \Termination{guaranteed if $\system{f}$ is feasible and $\idMat-\jac{\system{f}}(\lfp\system{f})$ is non-singular}
    
    $\currLower \gets \vec{0}$ ; $\numIters \gets 0$ \;
    $\tol \gets \eps$ \Comment*[r]{$\tau$ is the current \textit{tolerance}}
    
    \While{\True}{
        $\newLower \gets \Update(\system f,\currLower)$  \Comment*[r]{e.g. Kleene or Newton update} 
        \Comment{guess and verify phase starts here}
        \If{$\maxnorm{\currLower - \currLower'} \leq \tol$}{
            $\ev \gets \ApproxEV(\jac{\system{f}}(\currLower), \tol)$ \Comment*[r]{recall $\ev$ is normalized}
            \For{$k$ \From $0$ \To $\numIters$}{
                $\candidate \gets \currLower + \dampEvScale^{k} \eps \cdot \ev$ \Comment*[r]{optimistic guess, $d \in (0,1)$}
                \If{$\system{f}(\candidate) \leq \candidate$}{
                    \Return $(\currLower, \candidate)$ \Comment*[r]{guess was successful}
                }
            }
            $\numIters \gets \numIters + 1$ \;
            $\tol \gets \dampTol \cdot \tol$ \Comment*[r]{decrease tolerance for next guess, $c \in (0,1)$}
        }
        $\currLower \gets \newLower$ \;
    }
\end{algorithm}

The functions $\Update$ and $\ApproxEV$ used in Algorithm~\ref{alg:ovi} must satisfy the following contracts in order for the algorithm to be correct:
\begin{itemize}
    \item The sequence $\currLower_0 = \vec{0}$, $\currLower_{i+1} = \Update(\system f, \currLower_i)$ for $i \geq 0$, is a monotonically increasing sequence of rational vectors converging to $\lfp \system f$.
    \item $\ApproxEV$ must satisfy the following: Let $M \geq 0$ be an irreducible matrix with (normalized) Perron-Frobenius eigenvector $\vec{v} \allgreater \vec 0$.
    Then for all $\eps > 0$, we require that there exists $\tol > 0$ such that $\maxnorm{\ApproxEV(M, \tol) - \vec v} \leq \eps$.
    In words, $\ApproxEV$ approximates $\vec v$ up to arbitrarily small absolute error if the tolerance $\tol$ is chosen sufficiently small.
    Moreover, $\ApproxEV(M, \tol)$ returns a \emph{rational} vector.
\end{itemize}

In practice, both the Kleene and the Newton~\cite{DBLP:conf/stacs/EtessamiY05,DBLP:journals/jacm/EtessamiY09,DBLP:journals/siamcomp/EsparzaKL10} update operator can be used to implement $\Update$.
We outline a possible implementation of $\ApproxEV$ further below in \Cref{sec:implDetail}.

\begin{example}
    Consider the following PPS $\system f$: $x = \frac{1}{4} x^2 + \frac{1}{8}$, $y = \frac{1}{4}xy + \frac{1}{4}y + \frac{1}{4}$.
    The table illustrates the execution of \Cref{alg:ovi} on $\system f$ with $\eps = 0.1$ and $c = 0.5$:
    \begin{center}
    \setlength{\tabcolsep}{3pt}
    \renewcommand{\arraystretch}{1.0}
    \begin{tabular}{c c c c c c c c c}
        \toprule
        $\#$ & $\numIters$ & $\tol$ & $\currLower$ & $\newLower$ & $\maxnorm{\currLower - \currLower'}$ & $\ev$ & $\candidate$ & $\system{f}(\candidate) \leq \candidate$ \\
        \midrule
        \scriptsize 1 & $0$ & $0.1$ & $(0,0)$ & $(0.4,0.3)$ & $0.4$ & & & \\
        \scriptsize 2 & $0$ & $0.1$ & $(0.4,0.3)$ & $(0.5,0.4)$ & $0.1$ & $(1.0,0.8)$ & $(0.5,0.38)$ & \xmark\\
        \scriptsize 3 & $1$ & $0.05$ & $(0.5,0.4)$ & $(0.55,0.41)$ & $0.05$ & $(1.0, 0.9)$ & $(0.6, 0.49)$ & \cmark \\
        \bottomrule
    \end{tabular}
    \end{center}
    \noindent The algorithm has to improve the lower bound 3 times (corresponding to the 3 lines of the table).
    After the second improvement, the difference between the current lower bound $\currLower_2$ and the new bound $\newLower_2$ does not exceed the current tolerance $\tol_2 = 0.1$ and the algorithm enters the optimistic guessing stage.
    The first guess $\vec u _2$ is not successful.
    The tolerance is then decreased to $\tol_3 = c \cdot \tol_2 = 0.05$ and the lower bound is improved to $\newLower_3$.
    The next guess $\vec u_3$ is inductive.
    \qedexample
\end{example}

\begin{restatable}{theorem}{oviSoundAndComplete}
    \label{thm:oviSoundAndComplete}
    Algorithm~\ref{alg:ovi} is correct: when invoked with a strongly connected clean PPS $\system f$ and $\eps \in \Rats_{> 0}$, then (if it terminates) it outputs a pair $(\currLower,\candidate)$ of rational vectors s.t. $\currLower \leq \lfp\system{f}$, $\system{f}(\candidate) \leq \candidate$, and $\maxnorm{\vec{l} - \vec{u}} \leq \eps$.
    Moreover, if $\system f$ is feasible and $\idMat-\jac{\system{f}}(\lfp\system{f})$ is non-singular, then the algorithm terminates. 
\end{restatable}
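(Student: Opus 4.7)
The plan is to split the proof into soundness (partial correctness) and termination. For soundness, I would show that the loop invariants imply the stated postconditions. By the $\Update$ contract, the sequence $\vec l_0 = \vec 0$, $\vec l_{i+1} = \Update(\system f, \vec l_i)$ is monotone and converges from below to $\lfp\system f$, so any returned $\vec l$ satisfies $\vec l \leq \lfp\system f$. The algorithm returns $\vec u$ only after the explicit check $\system f(\vec u) \leq \vec u$, which by \Cref{thm:induction} gives $\lfp\system f \leq \vec u$; and because $\vec v$ is normalized and $d^k \leq 1$, $\maxnorm{\vec u - \vec l} = d^k \eps \leq \eps$.

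For termination, the main tool will be \Cref{thm:existsInductiveChar}: the non-singularity hypothesis is its condition \ref{it:nonsing}, so \ref{it:cone} holds and yields the normalized Perron-Frobenius eigenvector $\vec{v}^* \allgreater \vec 0$ of $\jac{\system f}(\lfp\system f)$ together with parameters $\delta_{max}, \eps_0 > 0$ such that a truncated cone of strictly inductive points sits above $\lfp\system f$ in direction $\vec{v}^*$. I would then choose $\delta^* \in (0, \min(\delta_{max}, \eps)]$ (fixed more precisely below), so that $\vec{u}^* := \lfp\system f + \delta^* \vec{v}^*$ lies in the strict inductive region; by continuity of $g(\vec u) := \vec u - \system f(\vec u)$, there exists $r > 0$ with the max-norm ball $B(\vec{u}^*, r)$ contained in the (non-strict) inductive region.

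The core of the termination argument will be that the algorithm eventually produces a guess inside $B(\vec{u}^*, r)$. Three asymptotics combine: (i) by the $\Update$ contract, $\vec l \to \lfp\system f$, so the sequence $\{\vec l_i\}$ is Cauchy and the entry test $\maxnorm{\vec l - \vec l'} \leq \tau$ is met for arbitrarily small $\tau$; hence the guess-phase is re-entered infinitely often (unless a guess succeeds), with $\tau$ scaled by $c$ and $\numIters$ incremented each round; (ii) since $\system f$ is strongly connected and clean and $\vec l \allgreater \vec 0$ for all late enough iterates, $\jac{\system f}(\vec l)$ is irreducible and its Perron-Frobenius eigenvector depends continuously on $\vec l$ (\Cref{thm:perronFrobenius}(1)); combined with the $\ApproxEV$ contract, this makes $\vec v$ arbitrarily close to $\vec{v}^*$; (iii) as $\numIters$ grows, some $k \in \{0, \ldots, \numIters\}$ satisfies $d^k \eps \in [d\delta^*, \delta^*]$. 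A triangle inequality $\maxnorm{\vec u - \vec{u}^*} \leq \maxnorm{\vec l - \lfp\system f} + d^k \eps \maxnorm{\vec v - \vec{v}^*} + |d^k \eps - \delta^*|$ then pushes each term below $r/3$, so $\vec u \in B(\vec{u}^*, r)$ is inductive and the outer loop exits.

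The hardest step will be quantitative. The grid-quantization error $|d^k \eps - \delta^*|$ is at best of order $(1-d)\delta^*$ for $d^k \eps$ on the geometric grid, so if the ball radius $r$ collapsed faster than $\delta^*$ as $\delta^* \to 0$, the argument would fail for small damping factor $d$. A first-order analysis of $g$ at $\lfp\system f$ (invertible by the non-singularity hypothesis) should give $r$ proportional to $\delta^*$, with constant depending only on $1 - \specRad{\jac{\system f}(\lfp\system f)}$, $\min_i v^*_i$, and $\maxnorm{\idMat - \jac{\system f}(\lfp\system f)}$. This lets one fix $\delta^*$ once and for all, independent of the iteration count. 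A secondary subtlety is transferring the $\ApproxEV$ contract---which only promises an accurate eigenvector of the matrix it receives---into a statement about the eigenvector of $\jac{\system f}(\lfp\system f)$, which should reduce to standard continuity of the Perron-Frobenius eigenvector in the matrix entries for irreducible non-negative matrices.
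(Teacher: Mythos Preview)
Your soundness argument and the overall termination strategy match the paper's: invoke condition~\ref{it:cone} of \Cref{thm:existsInductiveChar}, pick a strictly inductive target $\vec u^* = \lfp\system f + \delta^*\vec v^*$, use continuity of $\vec u \mapsto \vec u - \system f(\vec u)$ to get an open neighbourhood of inductive points, and show that the guesses converge to $\vec u^*$.

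Where you diverge is in handling the third term $|d^k\eps - \delta^*|$ of your triangle inequality. The paper sidesteps the grid-quantization worry entirely by taking $\delta^*$ \emph{on the grid}: fix $k^*$ with $d^{k^*}\eps < \delta_{max}$ and set $\delta^* := d^{k^*}\eps$. Once $N \geq k^*$ the inner loop tries $k = k^*$, the third term vanishes, and only the first two terms---both tending to zero---remain. No quantitative analysis of the ball radius is needed at all.

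Your proposed workaround, showing $r$ proportional to $\delta^*$ via a first-order expansion of $g$, does not close the gap as stated. Linearization gives roughly $r \approx C\delta^*$ with
\[
    C ~=~ \frac{(1-\lambda)\,\min_i v^*_i}{\maxnorm{\idMat - \jac{\system f}(\lfp\system f)}} ~,
\]
and there is no reason $C$ dominates $1-d$: whenever the Perron--Frobenius eigenvector has a small component (easy to arrange even in two dimensions), $C$ drops well below $1-d$, so the ball is narrower than the spacing of the geometric grid and your three-way split fails. This is a real gap in the argument as written, but the fix is the one-line change above---choose $\delta^*$ on the grid---after which the ``hardest step'' disappears and your proof coincides with the paper's.
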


The proof of \Cref{thm:oviSoundAndComplete} (see~\iftoggle{arxiv}{\Cref{proof:oviSoundAndComplete}}{\cite{arxiv}}) crucially relies on condition \ref{it:cone} of \Cref{thm:existsInductiveChar} that assures the existence of a \enquote{truncated cone} of inductive bounds centered around the Perron-Frobenius eigenvector of $\jac{\system f}(\lfp \system f)$ (see \Cref{fig:ppsIllustration} for an illustration).
Intuitively, since the lower bounds $\currLower$ computed by the algorithm approach the lfp $\lfp \system f$, the eigenvectors of $\jac{\system f}(\currLower)$ approach those of $\jac{\system f}(\lfp \system f)$.
As a consequence, it is guaranteed that the algorithm eventually finds an eigenvector that intersects the cone.
The inner loop starting on line 7 is needed because the \enquote{length} of the cone is a priori unknown; the purpose of the loop is to scale the eigenvector down so that it is ultimately small enough to fit inside the cone.

\subsection{Considerations for Implementing OVI}
\label{sec:implDetail}

As said earlier, there are at least two options for $\Update$: Kleene or Newton iteration.
We now show that $\ApproxEV$ can be effectively implemented as well.
Further below we comment on floating point arithmetic.

\paragraph{Approximating the Eigenvector.}
A possible implementation of $\ApproxEV$ relies on the \emph{power iteration} method (e.g.~\cite[Thm.~4.1]{saad2011numerical}).
Given a square matrix $M$ and an initial vector $\vec{v}_0$ with $M\vec{v}_0 \neq \vec 0$, power iteration computes the sequence $(\vec v_ i)_{i \geq 0}$ such that for $i > 0$, $\vec{v}_{i} = M \vec{v}_{i-1} / \maxnorm{M \vec{v}_{i-1}}$.

\begin{restatable}{lemma}{powerIteration}
    \label{thm:powerIteration}
    Let $M \geq 0$ be irreducible.
    Then power iteration applied to $M + \idMat$ and any $\vec v _0 > \vec 0$ converges to the Perron-Frobenius eigenvector $\vec{v} \allgreater \vec 0$ of $M$.
\end{restatable}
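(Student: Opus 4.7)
The plan is to reduce the claim to the classical convergence theorem for power iteration, which requires three ingredients: (a) the matrix has a strictly dominant eigenvalue in absolute value, (b) this eigenvalue is simple, and (c) the starting vector has a non-zero component along the corresponding eigenvector. The reason for shifting by $\idMat$ is precisely to ensure (a): if $M$ has period $h > 1$, then by part~(2) of the Perron-Frobenius theorem it carries $h$ distinct eigenvalues on the spectral circle, so ordinary power iteration on $M$ may fail to converge. Shifting will break this periodicity because a complex number $\lambda$ on the circle of radius $\rho(M)$ satisfies $|\lambda + 1| = \rho(M) + 1$ only if $\lambda = \rho(M)$.

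First I would check that $M + \idMat$ is non-negative and irreducible: its directed graph is obtained from $G_M$ by adding all self-loops, hence remains strongly connected. Applying \Cref{thm:perronFrobenius} to $M + \idMat$ yields that $\rho(M+\idMat)$ is a simple eigenvalue with a strictly positive eigenvector. Since $(M+\idMat)\vec v = (\lambda+1)\vec v$ iff $M\vec v = \lambda\vec v$, the Perron-Frobenius eigenvectors of $M$ and of $M+\idMat$ coincide, and $\rho(M+\idMat) = \rho(M)+1$. Next I would verify strict dominance: for any eigenvalue $\lambda$ of $M$ with $|\lambda| < \rho(M)$, the triangle inequality gives $|\lambda+1| \le |\lambda|+1 < \rho(M)+1$; and for $|\lambda|=\rho(M)$ with $\lambda \neq \rho(M)$, part~(2) of \Cref{thm:perronFrobenius} forces $\lambda = \gfvar^j \rho(M)$ with $\gfvar^j \neq 1$, so $\lambda$ is not a non-negative real and the triangle inequality is \emph{strict}: $|\lambda + 1| < |\lambda|+1 = \rho(M)+1$.

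With dominance and simplicity established, I would invoke the Jordan decomposition: writing $(M+\idMat)^k \vec v_0 = \mu_1^k c_1 \vec v + \sum_{j \geq 2} \mu_j^k \vec{w}_j(k)$, where $\mu_1 = \rho(M)+1$ occupies a $1{\times}1$ Jordan block (by simplicity), $|\mu_j| < \mu_1$ for $j \ge 2$, and $\|\vec{w}_j(k)\|$ grows at most polynomially in $k$. Normalizing by the max norm, all terms with $j \ge 2$ vanish in the limit, and $\vec v_k \to \vec v$ (which is already normalized by assumption), provided $c_1 \neq 0$. To see $c_1 \neq 0$, note that \Cref{thm:perronFrobenius} applied to $(M+\idMat)^T$ (which is also non-negative and irreducible) yields a strictly positive left eigenvector $\vec w \allgreater \vec 0$ for eigenvalue $\mu_1$; the coefficient $c_1$ equals $\vec w^T \vec v_0 / \vec w^T \vec v$, which is strictly positive since $\vec v_0, \vec v, \vec w \allgreater \vec 0$.

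The main technical subtlety is the strict-inequality step in the triangle inequality on the spectral circle, since $\lambda$ being real is not enough---it must be non-negative real---so the argument genuinely relies on part~(2) of Perron-Frobenius characterizing the peripheral spectrum as rotations of $\rho(M)$. The remainder is bookkeeping; alternatively, one may cite the standard convergence result (e.g.~\cite[Thm.~4.1]{saad2011numerical}) once the three hypotheses are in place, shortening the proof considerably.
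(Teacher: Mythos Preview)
Your proof is correct and takes essentially the same route as the paper: both verify the three standard hypotheses for power-iteration convergence (the paper likewise cites \cite[Thm.~4.1]{saad2011numerical}) via Perron--Frobenius, using the shift by $\idMat$ to force a strictly dominant eigenvalue, and both check non-orthogonality of $\vec v_0$ to the Perron eigenspace using positivity. The only tactical difference is that the paper obtains strict dominance by noting $M+\idMat$ has period~$1$ and invoking part~(2) of \Cref{thm:perronFrobenius} for $M+\idMat$, whereas you argue it directly via the strict triangle inequality on the peripheral spectrum of $M$; one small slip is that in your last step you write $\vec v_0 \allgreater \vec 0$ when the hypothesis is only $\vec v_0 > \vec 0$, but $\vec w^T \vec v_0 > 0$ still follows since $\vec w \allgreater \vec 0$.
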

The convergence rate of power iteration is determined by the ratio $|\lambda_2| / |\lambda_1|$ where $\lambda_1$ and $\lambda_2$ are eigenvalues of largest and second largest absolute value, respectively.
Each time $\ApproxEV$ is called in Algorithm~\ref{alg:ovi}, the result of the previous call to $\ApproxEV$ may be used as initial approximation $\vec v_0$.

\paragraph{Exact vs Floating Point Arithmetic.}
So far we have assumed exact arithmetic for the computations in \Cref{alg:ovi}, but an actual implementation should use floating point arithmetic for efficiency.
However, \emph{this leads to unsound results}.
More specifically, the condition $\system f(\vec u) \leq \vec u$ may hold in floating point arithmetic even though it is actually violated.
As a remedy, we propose to nevertheless run the algorithm with floats, but then verify its output $\vec u$ with exact arbitrary-precision rational arithmetic.
That is, we compute a rational number approximation $\vec u_{\Rats}$ of $\vec u$ and check $\system f(\vec u_{\Rats}) \leq \vec u_{\Rats}$ with exact arithmetic.
If the check fails, we resort to the following refinement scheme which is an instance of the general $k$-induction principle for complete lattices from~\cite{kinduction}:
We iteratively check the conditions
\[
    \system f(\vec u_{\Rats} \sqcap \system f(\vec u_{\Rats})) \leq \vec{u}_{\Rats}
    \,,\quad
    \system f (\vec u_{\Rats} \sqcap \system f(\vec u_{\Rats} \sqcap \system f(\vec u_{\Rats}))) \leq \vec{u}_{\Rats}
    \,,\quad
    \text{and so on,}
\]
where $\sqcap$ denotes pointwise minimum.
If one of the checks is satisfied, then $\lfp \system f \leq \vec u_{\Rats}$~\cite{kinduction}.
This scheme often works well in practice (see \Cref{sec:experiments}).
The original OVI from~\cite{ovi} uses a similar technique to refine its guesses.

\section{Certificates for Probabilistic Pushdown Automata}
\label{sec:ppda}

This section shows how the results from \Cref{sec:theoryMain} can be applied to pPDA.
We introduce some additional notation.
For finite sets $A$, $\dist{A}$ denotes the set of \emph{probability distributions} on $A$.
In this section we often denote tuples without parentheses and commata, e.g., we may write $ab$ rather than $(a,b)$.

\begin{definition}[{pPDA~\cite{DBLP:conf/lics/EsparzaKM04}}]
    A probabilistic pushdown automaton (pPDA) is a triple
    $\ppda = \ppdaInit$ where
    $\ppdaStates \neq \emptyset$ is a finite set of states,
    $\abStack \neq \emptyset$ is a finite stack alphabet, and
    $\ppdaFun \colon \ppdaStates \times \abStack \to \dist{\ppdaStates \times \abStack^{\leq 2}}$ is a probabilistic transition function.
\end{definition}
In the following, we often write $\trans{qZ}{p}{r\alpha}$ instead of $\ppdaFun(qZ)(r\alpha) = p$~\cite{DBLP:conf/lics/EsparzaKM04}.
Intuitively, $\trans{qZ}{p}{r\alpha}$ means that if the pPDA is in state $q$ and $Z$ is on top of the stack, then with probability $p$, the pPDA moves to state $r$, pops $Z$ and pushes $\alpha$ on the stack.
More formally, the semantics of a pPDA $\ppda = \ppdaInit$ is a countably infinite Markov chain with state space $\ppdaStates \times\abStack^*$ and transition probability matrix $M$ such that for all $q,r \in \ppdaStates$, $Z \in \abStack$, $\alpha \in \abStack^{\leq 2}$, $\gamma \in \abStack^*$, we have
\begin{align*}
    M(qZ\gamma, r\alpha\gamma) ~=~ \ppdaFun(qZ)(r\alpha) ~,
    \qquad
    M(q \eps, q \eps) ~=~ 1 ~,
\end{align*}
and all other transition probabilities are zero.
This Markov chain, where the initial state is fixed to $qZ$, is denoted $\mc_{\ppda}^{qZ}$ (see \Cref{fig:example_irrat} for an example).
As usual, one can formally define a probability measure $\prob_\ppda^{qZ}$ on the infinite runs of  $\mc_{\ppda}^{qZ}$ via the standard cylinder construction (e.g.,~\cite[Sec.\ 10]{DBLP:books/daglib/0020348}).

Consider a triple $qZr \in Q {\times} \Gamma {\times} Q$.
We define the \emph{return probability}\footnote{
    See~\cite{DBLP:conf/fossacs/WinklerGK22} for an explanation of this terminology.
} $\retProb{q}{Z}{r}$ as the probability of reaching $r\varepsilon$ in the Markov chain $\mc_{\pds}^{qZ}$, i.e., $\retProb{q}{Z}{r}  =  \prob_\ppda^{qZ} (\lozenge \{r\eps\})$, where $\lozenge \{r\eps\}$ is the set of infinite runs of $\mc_{\ppda}^{qZ}$ that eventually hit state $r\eps$.

\newcommand{\ppdaExample}{\ppda_{ex}}
\begin{figure}[t]
    \centering
    \begin{minipage}{0.23\textwidth}
        \begin{tikzpicture}[on grid, node distance=12mm,thick, every state/.style={inner sep=3pt, minimum size=1pt}]
        \node[rectangle,rounded corners,draw=black,minimum size=4mm] (p0) {$q$};
        \node[rectangle,rounded corners,draw=black, below=of p0,minimum size=4mm] (p1) {$r$};
        
        \draw[->] (p0) edge[loop right] node[auto]{$(\nicefrac{1}{2}, Z, \eps)$} (p0);
        \draw[->] (p0) edge[loop above] node[right=1.5mm] {$(\nicefrac{1}{4}, Z, ZZ)$} (p0);
        \draw[->] (p0) edge node[auto]{$(\nicefrac{1}{4}, Z, \eps)$} (p1);
        \draw[->] (p1) edge[loop right] node[auto]{$(1, Z, \eps)$} (p1);
        \end{tikzpicture}
    \end{minipage}
    \hspace{5mm}
    \begin{minipage}{0.55\textwidth}
        \vspace{3mm}
        \begin{tikzpicture}[initial text=, initial where=above, on grid, node distance=12mm and 20mm,thick]
        \node[rectangle,rounded corners,draw=black,minimum size=4mm] (p0) {$q\eps$};
        \node[initial,right=of p0,rectangle,rounded corners,draw=black,minimum size=4mm] (p0Z) {$qZ$};
        \node[right=of p0Z,rectangle,rounded corners,draw=black,minimum size=4mm] (p0ZZ) {$qZZ$};
        \node[right=of p0ZZ,rectangle,rounded corners,draw=black,minimum size=4mm] (p0ZZZ) {$\cdots$};
        \node[below=of p0,rectangle,rounded corners,draw=black,minimum size=4mm] (p1) {$r \varepsilon$};
        \node[below=of p0Z,rectangle,rounded corners,draw=black,minimum size=4mm] (p1Z) {$rZ$};
        \node[below=of p0ZZ,rectangle,rounded corners,draw=black,minimum size=4mm] (p1ZZ) {$rZZ$};
        \node[below=of p0ZZZ,rectangle,rounded corners,draw=black,minimum size=4mm] (p1ZZZ) {$\cdots$};
        
        \draw[->] (p0Z) edge[bend left] node[auto] {$\nicefrac 1 4$} (p0ZZ);
        \draw[->] (p0ZZ) edge[bend left] node[auto] {$\nicefrac 1 4$} (p0ZZZ);
        \draw[->] (p0Z) edge[] node[auto] {$\nicefrac 1 2$} (p0);
        \draw[->] (p0ZZ) edge[bend left] node[above] {$\nicefrac 1 2$} (p0Z);
        \draw[->] (p0ZZZ) edge[bend left] node[above] {$\nicefrac 1 2$} (p0ZZ);
        \draw[->] (p0Z) edge node[auto] {$\nicefrac 1 4$} (p1);
        \draw[->] (p0ZZ) edge node[auto] {$\nicefrac 1 4$} (p1Z);
        \draw[->] (p0ZZZ) edge node[auto] {$\nicefrac 1 4$} (p1ZZ);
        \draw[->] (p1Z) edge node[auto] {$1$} (p1);
        \draw[->] (p1ZZ) edge node[auto] {$1$} (p1Z);
        \draw[->] (p1ZZZ) edge node[auto] {$1$} (p1ZZ);
        \draw[->] (p0) edge[loop above] node[auto]{$1$} (p0);
        \draw[->] (p1) edge[loop above] node[auto]{$1$} (p1);
        \end{tikzpicture}
    \end{minipage}
    \medskip
    \begin{align*}
        \retProbVar{q}{Z}{q} ~=~ & \nicefrac{1}{4}  \big(\retProbVar{q}{Z}{q}\retProbVar{q}{Z}{q} + \retProbVar{q}{Z}{r}\retProbVar{r}{Z}{q}\big) + \nicefrac{1}{2} & \retProbVar{r}{Z}{q} ~=~ 0 \\
        \retProbVar{q}{Z}{r} ~=~ & \nicefrac{1}{4}  \big(\retProbVar{q}{Z}{q}\retProbVar{q}{Z}{r} + \retProbVar{q}{Z}{r}\retProbVar{r}{Z}{r} \big) + \nicefrac{1}{4} & \retProbVar{r}{Z}{r} ~=~ 1
    \end{align*} 
    \caption{
        Top left: The pPDA $\ppdaExample = (\{q, r\}, \{Z\}, \ppdaFun)$ where $\ppdaFun$ comprises the transitions
        $\trans{qZ}{1/4}{qZZ},\,
        \trans{qZ}{1/2}{q\eps},\,
        \trans{qZ}{1/4}{r\eps},\,
        \trans{rZ}{1}{r\eps}$.
        Top right: A fragment of the \emph{infinite} underlying Markov chain $\mc_{\ppda}^{qZ}$, assuming initial configuration $qZ$.
        Bottom: The associated equation system from \Cref{thm:retProbEqs}.
    }
    \label{fig:example_irrat}
\end{figure}

\begin{theorem}[{The PPS of return probabilities~\cite{DBLP:conf/lics/EsparzaKM04}\footnote{We refer to \cite[Sec.~3]{DBLP:journals/lmcs/KuceraEM06} for an intuitive explanation of the equations in $\systemRp{\ppda}$.}}]
    \label{thm:retProbEqs}
    Let $\ppda = \ppdaInit$ be a pPDA and $(\retProbVar{q}{Z}{r})_{qZr \,\in\, \ppdaStates\times\abStack\times\ppdaStates}$ be variables.
    For each $\retProbVar{q}{Z}{r}$, define
    \[
        \retProbVar{q}{Z}{r}
        \quad=\quad
        \sum_{\trans{qZ}{p}{sYX}} p \cdot \sum_{t \in Q} \retProbVar{s}{Y}{t} \cdot \retProbVar{t}{X}{r}
        ~+~
        \sum_{\trans{qZ}{p}{sY}} p \cdot \retProbVar{s}{Y}{r}
        ~+~
        \sum_{\trans{qZ}{p}{r\eps}} p
    \]
    and call the resulting PPS $\systemRp{\ppda}$.
    Then $\lfp\systemRp{\ppda} = (\retProb{q}{Z}{r})_{qZr \,\in\, \ppdaStates\times\abStack\times\ppdaStates}$.
\end{theorem}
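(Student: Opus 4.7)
The plan is to establish both that the vector $(\retProb{q}{Z}{r})_{qZr}$ of return probabilities is a fixpoint of $\systemRp{\ppda}$ and that it coincides with $\lfp\systemRp{\ppda}$. Combined with \Cref{thm:induction}, the fixpoint property alone only gives $\lfp\systemRp{\ppda} \leq (\retProb{q}{Z}{r})_{qZr}$, so an additional argument via Kleene approximation is needed for the converse inequality.

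For the fixpoint property I would fix $qZr$ and decompose the event $\lozenge\{r\eps\}$ in $\mc_{\ppda}^{qZ}$ according to the single outgoing transition from the initial configuration $qZ$; its three possible shapes match the three summands. If $\trans{qZ}{p}{sYX}$, then reaching $r\eps$ from $sYX$ factors into first popping $Y$ (ending in a configuration $tX$ for some $t \in Q$) and then popping $X$ starting from $t$; by the strong Markov property, together with the fact that transition probabilities depend only on the current state and the top-of-stack symbol, the probability of this composite event equals $\retProbVar{s}{Y}{t} \cdot \retProbVar{t}{X}{r}$, and summing over $t$ yields the first summand. The cases $\trans{qZ}{p}{sY}$ and $\trans{qZ}{p}{r\eps}$ are analogous (the latter being trivial), and the law of total probability assembles the three contributions into the defining equation of $\retProbVar{q}{Z}{r}$.

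For the converse inequality I would use Kleene iteration. Setting $\vec\ell^{(0)} = \vec 0$ and $\vec\ell^{(k+1)} = \systemRp{\ppda}(\vec\ell^{(k)})$, $\omega$-continuity of $\systemRp{\ppda}$ gives $\lfp\systemRp{\ppda} = \sup_k \vec\ell^{(k)}$. A straightforward induction on $k$, repeating the same case distinction as above, shows that $\ell^{(k)}_{qZr}$ equals the probability of reaching $r\eps$ via a run whose tree of nested push/pop matchings has height at most $k$. Since every finite run of $\mc_{\ppda}^{qZ}$ reaching $r\eps$ has a well-defined finite such height, continuity of $\prob_\ppda^{qZ}$ from below yields $\sup_k \ell^{(k)}_{qZr} = \retProb{q}{Z}{r}$, and hence $\lfp\systemRp{\ppda} = (\retProb{q}{Z}{r})_{qZr}$.

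The main obstacle is making the \enquote{first-pop-then-pop} factorisation in the push case fully rigorous. One must partition the runs of $\mc_{\ppda}^{qZ}$ through $sYX$ according to the first time (if any) the stack height returns to $|X|$, verify measurability of this partition in the cylinder $\sigma$-algebra of the infinite Markov chain, and invoke the strong Markov property at that stopping time to obtain the independence of the two phases. This is routine for pPDA but requires some care because $\mc_{\ppda}^{qZ}$ has countably infinitely many states.
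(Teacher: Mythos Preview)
Your proof is correct and follows the standard argument. Note, however, that the paper does \emph{not} supply its own proof of this theorem: it is quoted as a known result from~\cite{DBLP:conf/lics/EsparzaKM04}, with a pointer to~\cite[Sec.~3]{DBLP:journals/lmcs/KuceraEM06} for intuition. Your two-step outline (fixpoint via one-step unfolding plus strong Markov property, then minimality via Kleene approximants interpreted as bounded-height return probabilities) is precisely the classical proof found in those references, so there is nothing to contrast.
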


\begin{example}
    \label{ex:irrat_probs}
    \Cref{fig:example_irrat} shows a pPDA $\ppdaExample$ and the associated PPS $\systemRp{\ppdaExample}$.
    The least non-negative solution is $\retProbVar{q}{Z}{q} = 2 - \sqrt{2} \approx 0.586$ and $\retProbVar{q}{Z}{r} = \sqrt{2} - 1 \approx 0.414$ (and, of course, $\retProbVar{r}{Z}{q} = 0$, $\retProbVar{r}{Z}{r} = 1$).
    Thus by \Cref{thm:retProbEqs}, the return probabilities are $\retProb{q}{Z}{q} = 2 - \sqrt{2}$ and $\retProb{q}{Z}{r} = \sqrt{2} - 1$.
    \qedexample
\end{example}

The PPS $\systemRp{\ppda}$ is always feasible (because $\lfp\systemRp{\ppda} \leq \vec 1$).
$\systemRp{\ppda}$ is neither necessarily strongly connected nor clean.
Let $\systemRpClean{\ppda}$ denote the cleaned up version of $\systemRp{\ppda}$.

\begin{restatable}[Basic Certificates for pPDA]{proposition}{basicCertificate}
    \label{thm:basicCertificate}
    A \emph{basic certificate} for $\ppda = \ppdaInit$ is a \emph{rational} inductive upper bound $\vec u \in \Rats_{\geq 0}^{\ppdaStates \times \abStack \times \ppdaStates}$ on the lfp of the return probabilities system $\systemRp{\ppda}$ (see~Thm.~\ref{thm:retProbEqs}).
    They have the following properties:
    \begin{itemize}
        \item (Existence) $\forall \eps > 0$ there exists a basic certificate $\vec u$ with $\maxnorm{\lfp \systemRp{\ppda} - \vec u} \leq \eps$ if all maximal irreducible submatrices $M$ of $\jac{\systemRpClean{\ppda}}(\lfp \systemRpClean{\ppda})$ satisfy $\specRad{M} < 1$.
        \item (Complexity) Let $\beta$ be the maximum number of bits used to encode any of the numerators and denominators of the fractions occurring in $\vec u \in \Rats_{\geq 0}^{\ppdaStates \times \abStack \times \ppdaStates}$.
        Then checking $\systemRp{\ppda}(\vec u) \leq \vec u$, i.e., whether $\vec u$ is basic certificate for $\ppda$, can be done in time polynomial in $\beta$ and the size of $\ppda$.
    \end{itemize}
\end{restatable}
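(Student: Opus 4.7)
The plan is to prove the two claims separately: existence of certificates via SCC decomposition of the dependency graph of $\systemRpClean{\ppda}$ combined with \Cref{thm:existsInductiveChar}, and complexity of the verification check by a direct counting argument on polynomial evaluation.

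\textbf{Existence.} First observe that $\systemRp{\ppda}$ is always feasible since $\lfp\systemRp{\ppda} \leq \vec 1$ (all return probabilities are probabilities). Construct $\vec u$ componentwise. For each variable $\langle qZr \rangle$ removed by cleaning (i.e., with $\lfp = 0$), set the corresponding entry of $\vec u$ to $0$: every term of the defining polynomial must then contain another zero-lfp variable as a factor (else its value at the lfp would be strictly positive), so the inductive inequality for that coordinate holds as $0 \leq 0$ for any assignment of the remaining entries. It thus suffices to construct a rational inductive upper bound $\vec{\hat u}$ on $\lfp\systemRpClean{\ppda}$ of the required precision. I process the SCCs of the dependency graph of $\systemRpClean{\ppda}$ in reverse topological order. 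For each SCC $S$, let $\vec u_{\text{lower}}$ collect the already-constructed bounds for variables in lower SCCs; substituting these into the $S$-indexed equations yields a strongly connected PPS $\system g_S$ in the variables of $S$ alone. By monotonicity of non-negative polynomials, $\lfp \system g_S$ dominates the restriction of $\lfp \systemRpClean{\ppda}$ to $S$, and by continuity of $\lfp$ in the coefficients of a clean, feasible PPS one can make this gap arbitrarily small by choosing $\vec u_{\text{lower}}$ close enough to $\lfp\systemRpClean{\ppda}$ on lower SCCs. Cleanness of $\system g_S$ is then inherited from $\systemRpClean{\ppda}$, feasibility follows for sufficiently small perturbations, and the assumed spectral-radius bound on the maximal irreducible submatrices of $\jac{\systemRpClean{\ppda}}(\lfp\systemRpClean{\ppda})$ transfers to $\specRad{\jac{\system g_S}(\lfp \system g_S)} < 1$ by continuity of the spectral radius in the matrix entries. \Cref{thm:existsInductiveChar} (\ref{it:specrad}$\Rightarrow$\ref{it:cone}) now supplies a non-empty cone of rational inductive upper bounds for $\system g_S$ arbitrarily close to $\lfp \system g_S$. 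Picking one such bound $\vec u_S$ and concatenating with $\vec u_{\text{lower}}$ preserves inductivity on the combined coordinates (by construction of $\system g_S$). Induction over the SCCs, with tolerances chosen geometrically shrinking so that the accumulated error is $\leq \eps$, yields $\vec{\hat u}$.

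\textbf{Complexity.} The PPS $\systemRp{\ppda}$ has at most $|Q|^2 |\Gamma|$ equations; each defining polynomial has degree at most $2$ (since $|\alpha| \leq 2$) and at most $|Q| \cdot |\ppda|$ terms. Evaluating such a polynomial at a rational vector whose numerators and denominators have bit-size at most $\beta$ produces a rational of bit-size polynomial in $\beta$ and $|\ppda|$, since each of the polynomially many rational additions and multiplications blows up the bit-size by at most an additive polynomial in $\beta$ and $|\ppda|$. The final componentwise comparison $\systemRp{\ppda}(\vec u) \leq \vec u$ reduces to comparing polynomially many rationals of this size, which is again polynomial.

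\textbf{Main obstacle.} The principal technical point is the perturbation argument in the existence proof: substituting rational over-approximations from lower SCCs into the equations of $S$ shifts both $\lfp\system g_S$ and the relevant Jacobian, so one must argue that feasibility, cleanness, and the strict spectral-radius bound all survive a sufficiently small perturbation. A quantitative version of \Cref{thm:existsInductiveChar} (tracking how $\delta_{max}$ and $\eps$ depend on the data) together with standard continuity of eigenvalues in matrix entries closes this gap, but has to be set up carefully so that the tolerances chosen at each SCC level remain compatible with the overall error budget $\eps$.
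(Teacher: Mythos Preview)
Your proposal is correct and follows essentially the same approach as the paper: set the cleaned-out variables to zero, decompose $\systemRpClean{\ppda}$ into SCCs, process them bottom-up, substitute already-computed bounds into higher SCCs, invoke continuity of the spectral radius to preserve the strict inequality, and apply \Cref{thm:existsInductiveChar} at each level; the complexity argument via bounded degree and polynomial-size rational arithmetic is likewise the same. The only cosmetic difference is that the paper phrases the perturbation bookkeeping as ``we can always make the previously constructed $\vec u$ smaller while retaining inductivity'' rather than pre-selecting shrinking tolerances, but this is the same idea and you correctly flag it as the main technical point.
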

Existence of basic certificates follows from \Cref{thm:existsInductiveChar}  applied to each SCC of the cleaned-up version of $\systemRp{\ppda}$ individually.
However, note that in order to merely \emph{check} the certificate, i.e., verify the inequality $\system f(\vec u) \leq \vec u$, neither do SCCs need to be computed nor does the system has to be cleaned up.

\begin{example}
    \label{ex:basicCertificate}
    Reconsider the example pPDA and its associated (non-strongly connected) system of return probabilities from \Cref{fig:example_irrat}.
    We verify that $\vec u_{qZq} = 3/5$ and $\vec u_{qZr} = 1/2$ (as well as $\vec u_{rZq} = 0, \vec u_{rZr} = 1$) is a basic certificate:
    \begin{align*}
        \frac{1}{4}  \left( \frac{3}{5} \cdot \frac{3}{5}  + \frac{1}{2} \cdot 0 \right) + \frac{1}{2} = \frac{59}{100} ~\overset{\checkmark}{\leq}~ \frac{3}{5}
        \quad,\quad
        \frac{1}{4}  \left(\frac{3}{5} \cdot \frac 1 2 + \frac 1 2 \cdot 1 \right) + \frac{1}{4} = \frac{45}{100} ~\overset{\checkmark}{\leq}~  \frac{1}{2} ~.
    \end{align*}
    Note that $\retProb{q}{Z}{q} \approx 0.586 \leq 3/5 = 0.6$ and $\retProb{q}{Z}{r} \approx 0.414 \leq 1/2 = 0.5$.
    \qedexample
\end{example}

In the following we outline how a variety of key quantities associated with a pPDA can be verified using basic certificates.

\paragraph{Upper Bounds on Temporal Properties.}
We may use basic certificates to verify that a bad state $r_{bad}$ is reached with low probability, e.g., at most $p = 0.01$.
To this end, we remove the outgoing transitions of $r_{bad}$ and add the transitions $\trans{r_{bad}Z}{1}{r_{bad}\eps}$ for all $Z \in \abStack$.
Clearly, $r_{bad}$ is reached with probability at most $p$ from initial configuration $qZ$ iff $\retProb{q}{Z}{r_{bad}} \leq p$.
The results of~\cite{DBLP:conf/lics/EsparzaKM04} imply that this idea can be generalized to \emph{until}-properties of the form $\mathcal{C}_1 \,\mathcal{U}\, \mathcal{C}_2$, where $\mathcal{C}_1$ and $\mathcal{C}_2$ are \emph{regular} sets of configurations.

\paragraph{Certificates for the Output Distribution.}
Once a pPDA reaches the empty stack, we say that it has \emph{terminated}.
When modeling procedural programs, this corresponds to returning from a program's main procedure.
Assuming initial configuration $qZ$, the probability sub-distribution over the possible return values is then given by the return probabilities $\{\retProb{q}{Z}{r} \mid r \in \ppdaStates\}$.
Missing probability mass models the probability of non-termination.
Therefore, a basic certificate may be used to prove a point-wise upper bound on the output distribution as well as \underline{non} \emph{almost-sure termination} (AST).
If a pPDA $\ppda$ is known to be AST, then we can also certify a lower bound on the output distribution:
Suppose that $\vec u$ is a basic certificate for $\ppda$ and assume that $\ppda$ is AST from initial configuration $qZ$.
Define $\eps = \sum_{r \in \ppdaStates} \vec u_{qZr} - 1$.
Then for all $r \in \ppdaStates$, we have
$\vec u_{qZr} - \eps ~\leq~ \retProb{q}{Z}{r} ~\leq~ \vec{u}_{qZr}$.

\begin{example}
    The pPDA $\ppdaExample$ from \Cref{fig:example_irrat} is AST from initial configuration $qZ$, as the transition $\trans{qZ}{1/4}{r\eps}$ is eventually taken with probability 1, and the stack is emptied certainly once $r$ is reached.
    Using the basic certificate from \Cref{ex:basicCertificate} we can thus (correctly) certify that
    $0.5 \leq \retProb{q}{Z}{q} \leq 0.6$ and $0.4 \leq \retProb{q}{Z}{r} \leq 0.5$.
\end{example}

\newcommand{\exRew}[3]{E_{#1#2#3}}
\newcommand{\exRewTermPaths}[2]{E(#1#2)}
\newcommand{\exRewVar}[3]{\langle E_{#1#2#3}\rangle}
\newcommand{\rewFun}{R}
\newcommand{\systemRew}[2]{\system{f}_{#1,#2}}
\newcommand{\rvRew}{V_{\rewFun}}
\newcommand{\firstHit}[1]{\mathit{firstHit(#1)}}

\paragraph{Certificates for Expected Rewards.}
pPDA may also be equipped with a reward function $\ppdaStates \to \NonNegReals$.
It was shown in~\cite{DBLP:conf/lics/EsparzaKM05} that the expected reward
accumulated during the run of a pPDA is the solution of a linear equation system whose coefficients depends on the numbers $\retProb{q}{Z}{r}$.
Given a basic certificate $\vec u$, we obtain an equation system whose solution is an over-approximation of the true expected reward (see~\iftoggle{arxiv}{\Cref{app:pPDA_certification_details}}{\cite{arxiv}}).
We may extend the basic certificate $\vec u$ by the solution of this linear system to make verification straightforward.
Note that a program's \emph{expected runtime}~\cite{DBLP:journals/jcss/BrazdilKKV15,DBLP:conf/lics/OlmedoKKM16} is a special case of total expected reward.

\section{Implementation and Experiments}
\label{sec:experiments}

\paragraph{Our Tool: \toolname{pray}.}
We implemented our algorithm in the prototypical Java-tool \textsc{pray} (Probabilistic Recursion AnalYzer)~\cite{zenodo}.
It supports two input formats:
(i) Recursive probabilistic programs in a Java-like syntax (e.g.~\Cref{fig:and-or}); these programs are \emph{automatically} translated to pPDA.
(ii) Explicit PPS in the same syntax used by the tool \textsc{PReMo}~\cite{premo}.
The output of \toolname{pray} is a rational \emph{inductive} upper bound on the lfp of the return probability PPS of the input program's pPDA model (a basic certificate), or on the lfp of the explicitly given PPS.
The absolute precision $\eps$ is configurable.
The implementation works as follows:
\begin{enumerate}[label={(\arabic*)}]
    \item It parses the input and, if the latter is a program, constructs a pPDA model and the associated PPS of return probabilities.
    \item It computes an SCC decomposition of the PPS under consideration using standard algorithms implemented in the \toolname{jGraphT} library~\cite{jgrapht}.
    \item\label{implStage:main} It applies \Cref{alg:ovi} to the individual SCC in reverse topological order using floating point arithmetic.
    \Cref{alg:ovi} is instantiated with Kleene iteration\footnote{In fact, we use the slightly optimized Gauss-Seidel iteration (see~\cite[Sec.~5.2]{DBLP:phd/ethos/Wojtczak09}) which provides a good trade-off between ease of implementation and efficiency~\cite{DBLP:phd/ethos/Wojtczak09}.}, the power iteration for approximating eigenvectors as outlined in \Cref{sec:implDetail}, and constants $c=0.1$, $d=0.5$.
    We allow $\leq 10$ guesses per SCC.
    \item If stage \ref{implStage:main} is successful, the tool verifies the resulting floating point certificate using exact rational number arithmetic as described in \Cref{sec:implDetail}.
\end{enumerate}

\paragraph{Baselines.}
To the best of our knowledge, no alternative techniques for finding \emph{inductive} upper bounds in PPS have been described explicitly in the literature.
However, there is an (almost) out-of-the-box approach using an SMT solver:
Given a PPS $\vec x = \system f(\vec x)$, compute some lower bound $\currLower \leq \lfp\system f$ using an iterative technique.
Then query the SMT solver for a model (variable assignment) of the quantifier-free first-order logic formula
$
    \varphi_{\system f}(\vec x) = \bigwedge_{i=1}^n f_i(\vec x) \leq x_i  \wedge \currLower_i \leq x_i \leq \currLower_i + \eps
$
in the (decidable) theory of polynomial real arithmetic with inequality (aka QF\_NRA in the SMT community).
If such a model $\vec u$ exists, then clearly $\lfp\system f \leq \vec u$ and $\maxnorm{\vec l - \vec u} \leq \eps$.
If no model exists, then improve $\vec l$ and try again.
We have implemented this approach using the state-of-the-art SMT solvers \toolname{cvc5}~\cite{DBLP:conf/tacas/BarbosaBBKLMMMN22} and \toolname{z3}~\cite{DBLP:conf/tacas/MouraB08}, the winners of the 2022 SMT-COMP in the category QF\_NRA\footnote{\url{https://smt-comp.github.io/2022/results}}.

As yet another baseline, we have also implemented a variant of OVI for PPS which is closer to the original MDP algorithm from~\cite{ovi}.
In this variant, called \enquote{standard OVI} from now on, we compute the candidate $\vec u$ based on the \emph{relative} update rule $\vec u = (1 + \eps)\vec l$, where $\vec l$ is the current lower bound~\cite{ovi}.

\paragraph{Research Questions.}
We aim to shed some light on the following questions:
(A) How well does our algorithm scale?
(B) Is the algorithm suitable for PPS with different characteristics, e.g., dense or sparse?
(C) Is the requirement $\specRad{\jac{\system f(\lfp\system f)}} < 1$ restrictive in practice?
(D) How does our OVI compare to the baselines?

\begin{figure}[t]
    \centering
    \begin{adjustbox}{max height=12mm}
        \begin{minipage}{0.49\textwidth}
            \begin{verbatim}
bool and() {
  prob {
    1//2: return
      (1//2: true | 1//2: false);
    1//2: {
      if(!or()) return false;
      else return or(); } } }
            \end{verbatim}
        \end{minipage}
        \begin{minipage}{0.49\textwidth}
            \begin{verbatim}
bool or() {
  prob {
    1//2: return
      (1//2: true | 1//2: false);
    1//2: {
      if(and()) return true;
      else return and(); } } }
            \end{verbatim}
        \end{minipage}	
    \end{adjustbox}
    
    \caption{
        Program evaluating a random and-or tree~\cite{DBLP:journals/jcss/BrazdilKKV15}.
        The \texttt{prob}-blocks execute the contained statements with the respective probabilities (syntax inspired by Java's \texttt{switch}).
        Our tool automatically translates this program to a pPDA and computes a basic certificate (\Cref{thm:basicCertificate}) witnessing that calling \texttt{and()} returns \texttt{true} and \texttt{false} with probability $\leq 382/657 \approx 0.58$ and $391/933 \approx 0.42$, resp.
    }
    \label{fig:and-or}
\end{figure}

\paragraph{Benchmarks.}
To answer the above questions we run our implementation on two sets of benchmarks (\Cref{table:programs} and \Cref{table:pps}, respectively).
The first set consists of various example programs from the literature as well as a few new programs, which are automatically translated to pPDA.
This translation is standard and usually takes not more than a few seconds.
The programs
\benchmark{golden},
\benchmark{and-or} (see~\Cref{fig:and-or}),
\benchmark{virus},
\benchmark{gen-fun} are adapted from
\cite{%
    DBLP:conf/lics/OlmedoKKM16,%
    DBLP:journals/jcss/BrazdilKKV15,%
    DBLP:conf/fossacs/WinklerGK22%
} and~\cite[Program~5.6]{DBLP:journals/corr/abs-1809-10756}, respectively.
The source code of all considered programs is in~\iftoggle{arxiv}{\Cref{app:programs}}{\cite{arxiv}}.
We have selected only programs with possibly unbounded recursion depth which induce \emph{infinite} Markov chains.
The second benchmark set comprises explicit PPS from~\cite{premo}.
The instances \benchmark{brown}, \benchmark{lemonde}, \benchmark{negra}, \benchmark{swbd}, \benchmark{tiger}, \benchmark{tuebadz}, and \benchmark{wsj} all encode SCFG from the area of language processing (see~\cite{premo} for details).
\benchmark{random} is the return probability system of a randomly generated pPDA.

\begin{table}[t]
    \caption{
        Experiments with PPS obtained from recursive probabilistic programs.
        Columns \emph{vars} and \emph{terms} display the number of variables and terms in the PPS.
        Columns \emph{sccs} and \emph{scc$_{max}$} indicate the number of non-trivial SCC and the size of the largest SCC.
        $G$ is total number of guesses made by OVI (at least one guess per SCC).
        $t_{tot}$ is the total runtime excluding the time for model construction.
        $t_{\Rats}$ is the percentage of $t_{tot}$ spent on exact rational arithmetic.
        $D$ is the average number of decimal digits of the rational numbers in the certificate.
        The timeout (TO) was set to 10 minutes.
        Time is in ms.
        The absolute precision is $\eps = 10^{-3}$.
    }
    \label{table:programs}
    \centering
    \setlength{\tabcolsep}{3pt}
    \begin{adjustbox}{max width=\textwidth}
        \rowcolors{1}{}{lightgray}
        \begin{tabular}{    l        r      r      r      r      r      r      r   |  c      r      r      r      r    | c      r      r    | c      r      r      r }
            \toprule
            benchmark             & $|Q|$    & $|P|$    & $|\Gamma|$ & vars     & terms    & sccs     & scc$_{max}$ & cert     & $G$      & $D$      & $t_{\mathbb{Q}}$ & $t_{tot}$ & $cert_{\toolname{z3}}$ & $D_{\toolname{z3}}$ & $t_{\toolname{z3}}$ & $cert_{std}$ & $G_{std}$ & $D_{std}$ & $t_{std}$\\ 
            \midrule
            \benchmark{rw-0.499}  & 18       & 29       & 5        & 38       & 45       & 1        & 12       & \cmark   & 5        & 5        & 17\%     & 163      & \cmark   & 2        & 11       & \cmark   & 4        & 5        & 59      \\ 
            \benchmark{rw-0.500}  & 18       & 29       & 5        & 38       & 45       & 1        & 12       & \xmark   & 10       & -        & -        & 7327     & \cmark   & 2        & 10       & \xmark   & 10       & -        & 8083    \\ 
            \benchmark{rw-0.501}  & 18       & 29       & 5        & 38       & 45       & 1        & 12       & \cmark   & 5        & 4        & 6\%      & 36       & \cmark   & 13       & 12       & \cmark   & 4        & 5        & 23      \\ 
      \benchmark{geom-offspring}  & 24       & 40       & 5        & 52       & 80       & 4        & 24       & \cmark   & 8        & 6        & 13\%     & 15       & \cmark   & 9        & 16       & \cmark   & 8        & 6        & 14      \\ 
            \benchmark{golden}    & 27       & 49       & 6        & 81       & 94       & 1        & 36       & \cmark   & 1        & 5        & 30\%     & 10       & \cmark   & 7        & 14       & \cmark   & 2        & 4        & 12      \\ 
            \benchmark{and-or}    & 50       & 90       & 7        & 149      & 182      & 1        & 48       & \cmark   & 2        & 4        & 26\%     & 19       & \cmark   & 12       & 15260    & \cmark   & 2        & 4        & 19      \\ 
            \benchmark{gen-fun}   & 85       & 219      & 7        & 202      & 327      & 1        & 16       & \cmark   & 2        & 3        & 32\%     & 22       & \cmark   & 15       & 141      & \cmark   & 2        & 3        & 21      \\ 
            \benchmark{virus}     & 68       & 149      & 27       & 341      & 551      & 1        & 220      & \cmark   & 1        & 5        & 38\%     & 40       & \cmark   & 7        & 139      & \cmark   & 1        & 6        & 59      \\ 
            \benchmark{escape10}  & 109      & 174      & 23       & 220      & 263      & 1        & 122      & \cmark   & 1        & 4        & 5\%      & 56       & \cmark   & 7        & 48       & \cmark   & 1        & 8        & 71      \\ 
            \benchmark{escape25}  & 258      & 413      & 53       & 518      & 621      & 1        & 300      & \cmark   & 1        & 5        & 17\%     & 245      & \cmark   & 7        & 15958    & \cmark   & 1        & 9        & 172     \\ 
            \benchmark{escape50}  & 508      & 813      & 103      & 1018     & 1221     & 1        & 600      & \cmark   & 1        & 7        & 23\%     & 653      & \cmark   & 7        & 410      & \xmark   & 1        & -        & 400     \\ 
            \benchmark{escape75}  & 760      & 1215     & 153      & 1522     & 1825     & 1        & 904      & \cmark   & 2        & 9        & 10\%     & 3803     & \xmark   & -        & TO       & \xmark   & 1        & -        & 635     \\ 
           \benchmark{escape100}  & 1009     & 1614     & 203      & 2020     & 2423     & 1        & 1202     & \xmark   & 5        & -        & -        & 29027    & \cmark   & 6        & 939      & \xmark   & 1        & -        & 901     \\ 
           \benchmark{escape200}  & 2008     & 3213     & 403      & 4018     & 4821     & 1        & 2400     & \xmark   & 6        & -        & -        & 83781    & \xmark   & -        & TO       & \xmark   & 1        & -        & 2206    \\ 
         \benchmark{sequential5}  & 230      & 490      & 39       & 1017     & 1200     & 10       & 12       & \cmark   & 15       & 4        & 26\%     & 103      & \cmark   & 8        & 1074     & \cmark   & 15       & 5        & 204     \\ 
         \benchmark{sequential7}  & 572      & 1354     & 137      & 3349     & 3856     & 14       & 12       & \cmark   & 21       & 5        & 27\%     & 1049     & \cmark   & 8        & 12822    & \cmark   & 20       & 5        & 1042    \\ 
        \benchmark{sequential10}  & 3341     & 8666     & 1036     & 26367    & 29616    & 20       & 12       & \cmark   & 30       & 5        & 2\%      & 100613   & \cmark   & 8        & 453718   & \cmark   & 30       & 6        & 101554  \\ 
            \benchmark{mod5}      & 44       & 103      & 10       & 296      & 425      & 1        & 86       & \cmark   & 1        & 5        & 39\%     & 28       & \cmark   & 9        & 34150    & \xmark   & 2        & -        & 178     \\ 
            \benchmark{mod7}      & 64       & 159      & 14       & 680      & 1017     & 1        & 222      & \cmark   & 1        & 6        & 69\%     & 172      & \cmark   & 7        & 443      & \xmark   & 2        & -        & 624     \\ 
            \benchmark{mod10}     & 95       & 244      & 20       & 1574     & 2403     & 1        & 557      & \xmark   & 1        & -        & -        & 675      & \cmark   & 7        & 1245     & \xmark   & 2        & -        & 882     \\ 
            \bottomrule
        \end{tabular}
    \end{adjustbox}
\end{table}

\begin{table}[h]
    \caption{
        Experiments with explicitly given PPS (setup as in \Cref{table:programs}).
    }
    \label{table:pps}
    \centering
    \setlength{\tabcolsep}{4pt}
    \begin{adjustbox}{max width=\textwidth}
        \rowcolors{1}{}{lightgray}
        \begin{tabular}{    l        r      r      r      r   |   c      r      r      r      r   |  c      r      r   |  c      r      r      r }
            \toprule
            benchmark             & vars     & terms    & sccs     & scc$_{max}$ & cert     & $G$      & $D$      & $t_{\mathbb{Q}}$ & $t_{tot}$ & $cert_{\toolname{z3}}$ & $D_{\toolname{z3}}$ & $t_{\toolname{z3}}$ & $cert_{std}$ & $G_{std}$ & $D_{std}$ & $t_{std}$\\ 
            \midrule
            \benchmark{brown}     & 37       & 22866    & 1        & 22         & \cmark   & 2        & 6        & 74\%     & 3212     & \xmark   & -        & TO       & \cmark   & 2        & 8        & 9065    \\ 
            \benchmark{lemonde}   & 121      & 32885    & 1        & 48         & \cmark   & 2        & 5        & 97\%     & 40738    & \xmark   & -        & TO       & \cmark   & 2        & 5        & 38107   \\ 
            \benchmark{negra}     & 256      & 29297    & 1        & 149        & \cmark   & 2        & 7        & 89\%     & 10174    & \cmark   & 1        & 37248    & \cmark   & 1        & 7        & 8873    \\ 
            \benchmark{swbd}      & 309      & 47578    & 1        & 243        & \cmark   & 1        & 7        & 93\%     & 18989    & \xmark   & -        & TO       & \cmark   & 1        & 8        & 67314   \\ 
            \benchmark{tiger}     & 318      & 52184    & 1        & 214        & \cmark   & 2        & 8        & 98\%     & 94490    & \cmark   & 1        & 17454    & \cmark   & 1        & 8        & 90801   \\ 
            \benchmark{tuebadz}   & 196      & 8932     & 2        & 168        & \cmark   & 4        & 9        & 85\%     & 2666     & \cmark   & 1        & 15323    & \cmark   & 3        & 9        & 2700    \\ 
            \benchmark{wsj}       & 240      & 31170    & 1        & 194        & \cmark   & 2        & 9        & 96\%     & 30275    & \xmark   & -        & TO       & \cmark   & 2        & 9        & 29038   \\ 
            \benchmark{random}    & 10000    & 20129    & 1        & 8072       & \cmark   & 3        & 7        & 5\%      & 17585    & \xmark   & -        & TO       & \cmark   & 4        & 8        & 16357   \\
            \bottomrule
        \end{tabular}
    \end{adjustbox}
\end{table}

\paragraph{Summary of Results.}
We ran the experiments on a standard notebook.
The approach based on \toolname{cvc5} turns out to be not competitive (see \iftoggle{arxiv}{\Cref{app:z3vscvc5}}{\cite{arxiv}}).
We thus focus on \toolname{z3} in the following.
Both \toolname{pray} and the \toolname{z3} approach handle most of the programs from \Cref{table:programs} within a 10 minute time limit.
The considered programs induce sparse PPS with 38 - 26,367 variables, and most of them have just a single SCC.
Notably, the examples with greatest maximum SCC size are only solved by \toolname{z3}.
\toolname{pray} and \toolname{z3} need at most 95 and 31 seconds, respectively, for the instances where they succeed.
In many cases (e.g., \benchmark{rw-5.01}, \benchmark{golden}, \benchmark{virus}, \benchmark{brown}, \benchmark{swbd}), the resulting certificates formally disprove AST.
For the explicit PPS in \Cref{table:pps}, \toolname{pray} solves all instances whereas \toolname{z3} only solves 3/8 within the time limit, and only finds the trivial solution $\vec 1$.
Most of these benchmarks contain dense high-degree polynomials, and our tool spends most time on performing exact arithmetic.
Standard OVI (rightmost columns in \Cref{table:programs,table:pps}) solves strictly less instances than our eigenvector-based OVI.
On some instances, Standard OVI is slightly faster (if it succeeds).
However, on some larger benchmarks (brown, swbd) our variant runs $\approx 3\times$ faster.

\paragraph{Evaluation of Research Questions.}
(A) Scalability:
Our algorithm succeeds on instances with maximum SCC size of up to 8,000 and number of terms over 50,000.
\toolname{pray} solves all instances with a maximum SCC size of $\leq$ 1,000 in less than 2 minutes per instance.
For the examples where our algorithm does not succeed (e.g., \benchmark{escape100}) it is mostly because it fails converting a floating point to a rational certificate.
(B) PPS with different flavors:
The problems in \Cref{table:programs} (low degree and sparse, i.e., few terms per polynomials) and \Cref{table:pps} (higher degree and dense) are quite different.
A comparison to the SMT approach suggests that our technique might be especially well suited for dense problems with higher degrees.
(C) Non-singularity:
The only instance where our algorithm fails because of the non-singularity condition is the symmetric random walk \benchmark{rw-0.500}.
We therefore conjecture that this condition is often satisfied in practice.
(D) Comparison with baselines:
There is no clear winner.
Some instances can only be solved by one tool or the other (e.g., \benchmark{escape100} and \benchmark{brown}).
However, \toolname{pray} often delivers more succinct certificates, i.e., the rational numbers have less digits.
Moreover, \toolname{z3} behaves \emph{much less predictably} than \toolname{pray}.

\section{Conclusion and Future Work}
\label{sec:conclusion}

We have proposed using inductive bounds as certificates for various properties in probabilistic recursive models, and presented the first dedicated algorithm for computing such bounds.
Our algorithm already scales to non-trivial problems.
A remaining bottleneck is the need for exact rational arithmetic.
This might be improved using appropriate rounding modes as in~\cite{DBLP:conf/tacas/Hartmanns22}.
Additional future work includes certificates for \emph{lower} bounds and termination.

\subsubsection*{Data availability statement}
The datasets generated during and/or analysed during the current study are available in the Zenodo repository~\cite{zenodo}.

%
%
%
\bibliographystyle{splncs04}
\bibliography{mybibliography}

\iftoggle{arxiv}{
    \clearpage
    \appendix
    \section{Full Proofs}
\label{app:proofs}

\subsection{Proof of~\Cref{thm:existsInductiveChar}}
\label{proof:existsInductiveChar}

\existsInductiveChar*

We now explain the proof of \Cref{thm:existsInductiveChar}.
The proof heavily relies on a \emph{linear approximation} of $\system f$ around the lfp $\lfp\system f$.
Intuitively, this is where the Jacobi matrix $\jac{\system f}(\lfp\system f)$ comes into play.
This is formalized via Taylor's familiar theorem.

\begin{lemma}[{Taylor's Theorem; cf.~\cite[Lem.~2.3]{DBLP:journals/siamcomp/EsparzaKL10}}]
    \label{thm:linearization}
    Let $\system{f}$ be a feasible PPS.
    Then for all vectors $\vec{u} \geq \vec 0$, we have
    \begin{align*}
    \system{f}({\lfp{\system f}} + \vec{u}) 
    \quad=\quad
    \lfp{\system f} ~+~ \jac{\system{f}}(\lfp \system f) \vec{u} ~+~ R_{\vec{u}} \vec{u}
    \end{align*}
    where $R_{\vec{u}}$ is a matrix that depends on $\vec u$ such that $\lim_{\vec{u} \to \vec{0}} R_{\vec u} = 0$.
    More specifically, it holds that $\vec{0} \leq R_{\vec{u}} \vec{u} \leq \big(\jac{\system f}(\lfp \system f + \vec{u}) - \jac{\system f}(\lfp \system f) \big) \vec{u}$.
\end{lemma}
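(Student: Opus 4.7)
The plan is to establish the expansion by working on each coordinate $f_i$ separately and then assembling the result. I would proceed in three steps.

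First, since each $f_i$ is a polynomial in $n$ variables with non-negative real coefficients, I can write it in multi-index form $f_i(\vec x) = \sum_{\alpha} c_{i,\alpha}\, \vec x^{\alpha}$ with $c_{i,\alpha} \geq 0$. Substituting $\vec x = \lfp\system f + \vec u$ and expanding via the multinomial theorem, then grouping terms by total degree in $\vec u$, isolates the constant term $f_i(\lfp\system f)$, the linear term $\sum_j \frac{\partial f_i}{\partial x_j}(\lfp\system f)\,u_j$ (which is the $i$-th row of $\jac{\system f}(\lfp\system f)\vec u$), and a higher-order remainder in which every monomial has total degree at least two in the components of $\vec u$. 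This already proves the existence of \emph{some} $R_{\vec u}$ satisfying the equality, since any degree-$\geq 2$ monomial contains a factor $u_j$ that can be pulled out to populate the $j$-th column of the matrix.

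Second, to obtain a usable closed form, I would invoke the integral form of the remainder: applying the fundamental theorem of calculus to $t \mapsto \system f(\lfp\system f + t \vec u)$ yields
\begin{align*}
\system f(\lfp\system f + \vec u) ~-~ \system f(\lfp\system f)
~=~
\left(\int_{0}^{1} \jac{\system f}(\lfp\system f + t \vec u)\,dt\right) \vec u.
\end{align*}
Since $\lfp\system f$ is a fixpoint, $\system f(\lfp\system f) = \lfp\system f$. Setting
\begin{align*}
R_{\vec u} ~=~ \int_{0}^{1} \bigl(\jac{\system f}(\lfp\system f + t \vec u) ~-~ \jac{\system f}(\lfp\system f)\bigr)\,dt
\end{align*}
gives exactly the claimed decomposition.

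Third, for the two entrywise bounds I would use the \emph{monotonicity} of $\jac{\system f}$ on the non-negative orthant, which is a direct consequence of the non-negativity of coefficients: whenever $\vec y \leq \vec y'$ in $\NonNegReals^n$, each partial derivative $\partial f_i/\partial x_j$ is a polynomial with non-negative coefficients and hence satisfies $\partial f_i/\partial x_j(\vec y) \leq \partial f_i/\partial x_j(\vec y')$. Applied to the chain $\lfp\system f \leq \lfp\system f + t\vec u \leq \lfp\system f + \vec u$ for $t \in [0,1]$ and integrated entrywise, this yields both $R_{\vec u}\vec u \geq \vec 0$ and $R_{\vec u}\vec u \leq \bigl(\jac{\system f}(\lfp\system f + \vec u) - \jac{\system f}(\lfp\system f)\bigr)\vec u$. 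The limit $\lim_{\vec u \to \vec 0} R_{\vec u} = 0$ then follows from continuity of the polynomial entries of $\jac{\system f}$ at $\lfp\system f$.

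The computations are genuinely routine, so the only real care point will be the \emph{entrywise} nature of the monotonicity argument: making sure that every inequality throughout is read componentwise, and that the Riemann integral of a componentwise-nonnegative matrix-valued map is componentwise nonnegative. The non-negativity of the coefficients of $\system f$ is the essential ingredient here — without it, $\jac{\system f}$ would fail to be monotone and the upper bound on $R_{\vec u}\vec u$ would break down.
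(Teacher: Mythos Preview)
Your proof is correct. The paper itself does not prove this lemma; it merely states it with a citation to \cite[Lem.~2.3]{DBLP:journals/siamcomp/EsparzaKL10} and then uses it as a black box in the proof of \Cref{thm:existsInductiveChar}. Your integral-remainder argument is a clean and self-contained way to establish both the decomposition and the two-sided bound on $R_{\vec u}\vec u$: the key observation that each entry of $\jac{\system f}$ is itself a polynomial with non-negative coefficients (hence monotone on $\NonNegReals^n$) is exactly what makes the sandwich $\jac{\system f}(\lfp\system f) \leq \jac{\system f}(\lfp\system f + t\vec u) \leq \jac{\system f}(\lfp\system f + \vec u)$ work, and integration preserves the entrywise order. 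Your first step (multinomial expansion) is strictly speaking redundant once you have the integral form, but it does no harm as motivation.
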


\begin{proof}[Proof of \Cref{thm:existsInductiveChar}]
    \enquote{\ref{it:nonsing} $\implies$ \ref{it:specrad}}:
    By \Cref{thm:specRadBounded} we have $\specRad{\jac{\system f} (\lfp\system f)} \leq 1$.
    Towards contradiction assume that $\specRad{\jac{\system f} (\lfp\system f)} = 1$.
    By the Perron-Frobenius Theorem, $1$ is an eigenvalue of $\jac{\system f} (\lfp\system f)$, which means that there exists $\vec u \neq \vec 0$ such that $\jac{\system f} (\lfp\system f) \vec u = \vec u$.
    This $\vec u$ is in the kernel of $\idMat - \jac{\system f} (\lfp \system{f})$, which contradicts the assumption that $\idMat - \jac{\system f} (\lfp \system{f})$ is non-singular.

    \enquote{\ref{it:specrad} $\implies$ \ref{it:nonsing}}:
    It is a well-known result that for an arbitrary real matrix $M$ the series $\sum_{k = 0}^\infty M^k$ converges iff $\specRad{M} < 1$.
    The limit of the series is the inverse of $\idMat - M$ because
    \[
        (I-M) \sum_{k=0}^\infty M
        ~=~
        \sum_{k=0}^\infty M^k - \sum_{k=1}^\infty M^k
        ~=~
        M^0
        ~=~
        \idMat
        ~.
    \]

    \enquote{\ref{it:specrad} $\implies$ \ref{it:cone}}:
    Let $\specRad{\jac{\system f} (\lfp\system f)} =: \lambda < 1$.
    By the Perron-Frobenius Theorem, the Jacobi matrix $\jac{\system f} (\lfp\system f)$ has a unique normalized eigenvector $\vec{v} \allgreater \vec{0}$ wrt.\ eigenvalue $\lambda$:
    \begin{align}
    \label{eq:eigenvector}
    \jac{\system f} (\lfp\system f) \vec{v}
    ~=~
    \lambda\vec{v}
    ~\allsmaller~
    \vec{v}
    ~.
    \end{align}
    
    Our goal is to define the values $\eps$ and $\delta_{max}$ whose existence we claimed in \Cref{thm:existsInductiveChar}\ref{it:cone}.
    Let $c_{min} > 0$ be the smallest component of $(1-\lambda)\vec{v} \allgreater \vec{0}$.
    We define
    \begin{align}
    \label{eq:defeps}
    \eps ~:=~ \frac{c_{min}}{3 \maxnorm{\jac{\system f}(\lfp\system f)}} ~,
    \end{align}
    where $\maxnorm{\jac{\system f}(\lfp\system f)} = \max_{\maxnorm{\vec y} = 1} \maxnorm{\jac{\system f}(\lfp\system f) \vec y}$ is the maximum row sum of $\jac{\system f}(\lfp\system f)$.
    Note that $\maxnorm{\cdot}$ is the operator norm induced by the maximum norm.
    Then it holds for all $\vec{\eps}$ with $\maxnorm{\vec{\eps}} \leq \eps$ that
    \begin{align}
    \label{eq:2to4intermediate}
    \maxnorm{\jac{\system f}(\lfp\system f)  \vec{\eps}}
    ~\leq~
    \maxnorm{\jac{\system f}(\lfp\system f)} \maxnorm{ \vec{\eps}}
    ~\leq~ 
    \maxnorm{\jac{\system f}(\lfp\system f)} \frac{c_{min}}{3 \maxnorm{\jac{\system f}(\lfp\system f)}}
    ~=~
    \frac{1}{3} c_{min}
    ~.
    \end{align}
    The first inequality in \eqref{eq:2to4intermediate} is a property of operator norms (which is straightforward in the case of the maximum norm).
    Since $c_{min}$ was the smallest component of $(1-\lambda)\vec{v}$, \eqref{eq:2to4intermediate} implies
    \begin{align}
    \label{eq:epslambda}
    \jac{\system f}(\lfp\system f)  \vec{\eps} ~\leq~ \frac{1}{3}(1-\lambda)\vec{v} ~.
    \end{align}
    We now define $\delta_{max}$ as follows:
    \begin{align}
        \label{eq:defDeltaMax}
        \delta_{max} ~:=~ \sup\, \{\delta > 0 \mid \forall \vec \eps \geq \vec 0 \text{ s.t. } \maxnorm{\vec \eps} \leq \eps \colon R_{\delta(\vec{v} + \vec{\eps})} (\vec{v} + \vec{\eps}) \leq \frac{1}{2} (1-\lambda )\vec{v}\} ~,
    \end{align}
    where $R_{\delta(\vec{v} + \vec \eps)}$ is the matrix from \Cref{thm:linearization} which satisfies
    \[
        \system{f}({\lfp{\system f}} + \delta(\vec{v} + \vec \eps))
        ~=~
        \lfp{\system f} + \delta\jac{\system{f}}(\lfp \system f) (\vec{v} + \vec \eps) + \delta R_{\delta(\vec{v} + \vec \eps)} (\vec{v} + \vec \eps) ~.
    \]
    We now argue that $\delta_{max} > 0$.
    This is not immediately obvious because of the $\forall$-quantification in \eqref{eq:defDeltaMax}.
    Let $\delta > 0$ be arbitrary.
    Further, let $\vec\eps \geq \vec 0$ be such that $\maxnorm{\vec \eps} \leq \eps$.
    In the following, we write $\vec \eps' = (\eps \ldots \eps)$.
    We have
    \begin{align*}
        & R_{\delta(\vec{v} + \vec{\eps})} (\vec{v} + \vec{\eps}) \\
        ~=~ & \frac{1}{\delta} R_{\delta(\vec{v} + \vec{\eps})} \delta(\vec{v} + \vec{\eps}) \\
        ~\leq~ & \frac{1}{\delta} \big(\jac{\system f}(\lfp \system f + \delta(\vec{v} + \vec{\eps})) - \jac{\system f}(\lfp \system f) \big) \delta(\vec{v} + \vec{\eps}) \tag{\Cref{thm:linearization}} \\
        ~=~ & \big(\jac{\system f}(\lfp \system f + \delta(\vec{v} + \vec{\eps})) - \jac{\system f}(\lfp \system f) \big) (\vec{v} + \vec{\eps}) \\
        ~\leq~ & \big(\jac{\system f}(\lfp \system f + \delta(\vec v + \vec\eps')) - \jac{\system f}(\lfp \system f) \big) (\vec{v} + \vec\eps') \tag{Jacobi matrix is monotonic}\\
        ~=:~ & M_{\delta} (\vec v + \vec\eps')
    \end{align*}
    Note that $M_{\delta}$ does not depend on $\vec\eps$ and $\lim_{\delta \to 0} M_{\delta} = 0$.
    We can therefore find a specific $\delta^* > 0$ such that $M_{\delta^*}(\vec v + \vec\eps') \leq \frac{1}{2}(1-\lambda)\vec v$.
    On the other hand, we have just shown for all $\vec\eps \geq \vec 0$ with $\maxnorm{\vec \eps} \leq \eps$ and \emph{all} $\delta > 0$ that
    $
        R_{\delta(\vec{v} + \vec{\eps})} (\vec{v} + \vec{\eps}) \leq M_{\delta}(\vec v + \vec\eps')
    $.
    So we have in particular for all $\vec\eps \geq \vec 0$ with $\maxnorm{\vec \eps} \leq \eps$ that
    \[
        R_{\delta^*(\vec{v} + \vec{\eps})} (\vec{v} + \vec{\eps})
        ~\leq~
        M_{\delta^*}(\vec v + \vec\eps')
        ~\leq~
        \frac{1}{2}(1-\lambda)\vec v
        ~.
    \]
    Hence $\delta_{max} \geq \delta^* > 0$.
    
    Finally, let $0 < \delta \leq \delta_{max}$ and $\vec{\tilde v} \geq \vec{v}$ with $\maxnorm{\vec{v} - \vec{\tilde v} } \leq \eps$, i.e., $\vec{\tilde v} = \vec{v} + \vec{\eps}$ for some $\vec{\eps} \geq  \vec 0$ with $\maxnorm{\vec \eps} \leq \eps$.
    Then
    \begin{align*}
    & \system{f}({\lfp{\system f}} + \delta(\vec{v} + \vec \eps))\\
    ~=~& \lfp{\system f} + \delta\jac{\system{f}}(\lfp \system f) (\vec{v} + \vec \eps) + \delta R_{\delta(\vec{v} + \vec \eps)} (\vec{v} + \vec \eps) \tag{by Taylor's Theorem (\Cref{thm:linearization})} \\
    ~=~& \lfp{\system f} + \delta \lambda \vec v + \delta\jac{\system{f}}(\lfp \system f)\vec \eps + \delta R_{\delta(\vec{v} + \vec \eps)} (\vec{v} + \vec \eps) \tag{by \eqref{eq:eigenvector}} \\
    ~\leq~& \lfp{\system f} + \delta \lambda \vec{v} + \delta  \frac{1}{3}(1-\lambda)\vec{v} + \delta R_{\delta(\vec{v} + \vec \eps)} (\vec{v} + \vec \eps) \tag{by \eqref{eq:epslambda}} \\
    ~\leq~& \lfp{\system f} + \delta \lambda \vec{v} + \delta  \frac{1}{3}(1-\lambda)\vec{v} + \delta  \frac{1}{2}(1-\lambda)\vec{v} \tag{by \eqref{eq:defDeltaMax}} \\
    ~\allsmaller~& \lfp{\system f} + \delta \lambda \vec{v} + \delta  \frac{1}{2}(1-\lambda)\vec{v} + \delta  \frac{1}{2}(1-\lambda)\vec{v} \tag{because $\delta (1-\lambda) \vec v \allgreater \vec 0$} \\
    ~=~&  \lfp{\system f} + \delta \vec{v} \tag{simplification}\\
    ~\leq~& \lfp{\system f} + \delta (\vec{v} + \vec \eps) \tag{because $\vec \eps \geq \vec{0}$}\\
    \end{align*}
    
    \enquote{\ref{it:cone} $\implies$ \ref{it:inductiveNotFp}}:
    Trivial.

    \enquote{\ref{it:inductiveNotFp} $\implies$ \ref{it:specrad}}:
    By \ref{it:inductiveNotFp} there exists $\vec u$ such that $\system f(\vec u) < \vec u$.
    By \Cref{thm:induction} this implies that $\lfp\system f < \vec u$, so we can write $\vec u = \lfp\system f +\vec v$ for some $\vec v > \vec 0$.
    
    Using Taylor's Theorem (\Cref{thm:linearization}), it follows that
    \begin{align}
    \label{eq:3to2intermediate}
    \system{f}({\lfp{\system f}} + \vec v) 
    ~=~
    \lfp{\system f} + \jac{\system{f}}(\lfp \system f) \vec v + R_{\vec v} \vec v
    ~<~
    \lfp{\system f} + \vec v
    ~.
    \end{align}
    Using that $R_{\vec v} \vec v \geq \vec 0$, \eqref{eq:3to2intermediate} implies that
    \begin{align}
    \jac{\system{f}}(\lfp \system f) \vec v
    ~<~
    \vec v
    ~.
    \end{align}
    
    
    The claim now follows by applying the following lemma to the matrix $\jac{\system{f}}(\lfp \system f)$ and the vector $\vec v$:
    \begin{lemma}
        \label{thm:MuSmallerU}
        Let $M \geq 0$ be an irreducible $n\times n$-matrix.
        If there exists $\vec u > \vec 0$ such that $M \vec u < \vec u$, then $\vec u \allgreater \vec 0$, $M^n \vec u \allsmaller \vec u$ and $\specRad{M} < 1$.
    \end{lemma}
    \begin{proof}
        First observe that since multiplication by $M$ is monotone we have for all $0 \leq k_1 \leq k_2$ that
        \[
        \vec 0 \leq  M^{k_2} \vec u \leq M^{k_1} \vec u \leq \vec u ~.
        \]
        
        We first show that $\vec u \allgreater \vec 0$, which is essentially \cite[Lemma 5.3]{DBLP:journals/siamcomp/EsparzaKL10}.
        Since $\vec u > \vec 0$, there must be $1 \leq i \leq n$ such that $\vec u_i > 0$.
        Now let $1 \leq j \leq n$ be arbitrary.
        Since $M$ is irreducible there exists $0 \leq k < n$ such that $M^k_{j,i} > 0$.
        This implies that $(M^k \vec u)_j > 0$.
        By monotonicty, $\vec u \geq M^k \vec u$, and thus $\vec u_j \geq (M^k \vec u)_j > 0$.
        Since $j$ was arbitrary, $\vec u \allgreater \vec 0$.
        
        Next we show $M^n \vec u \allsmaller \vec u$.
        Since $M \vec u < \vec u$ holds by assumption, there exists $1 \leq i \leq n$ such that $(M \vec u)_i < \vec u _i$.
        Let $1 \leq j \leq n$ be a arbitrary.
        Since $M$ is irreducible, there exists $0 \leq k < n$ such that $(M^k)_{j,i} > 0$.
        We now show that $(M^n \vec u)_j < u_j$ which implies that $M^n \vec u \allsmaller \vec u$ as $j$ was chosen arbitrarily:
        \begin{align*}
        &(M^n \vec u) _j \\
        ~\leq~ &(M^k M \vec u) _j \tag{by monotonicity, and because $k+1 \leq n$} \\
        ~=~ & (M^k)_{j,i} (M \vec u)_i + \sum_{l\neq i} (M^k)_{j,l} (M \vec u)_l \tag{Def.\ matrix-vector product}\\
        ~<~ & (M^k)_{j,i} \vec u_i + \sum_{l\neq i} (M^k)_{j,l} (M \vec u)_l \tag{because $(M \vec u)_i < \vec u _i$ and $(M^k)_{j,i} > 0$}\\
        ~\leq~ & (M^k)_{j,i} \vec u_i + \sum_{l\neq i} (M^k)_{j,l} \vec u_l \tag{because $(M \vec u)_l \leq \vec u _l$} \\
        ~=~ & (M^k \vec u)_j \leq \vec u_j
        \end{align*}
        
        It remains to show that $\specRad{M} < 1$.
        We do this by showing that the powers of $M$ (i.e., the sequence $(M^k)_{k \geq 0}$) converge to the zero matrix.
        Since $M^n \vec u \allsmaller \vec u$, we can choose $c < 1$ such that $M^n \vec u \leq c \vec u$.
        Then for all $m \geq 1$ it holds that $M^{nm} \vec u \leq c^m \vec u$, so we have
        \[
        \lim_{k \to \infty} M^k \vec u ~=~ \vec 0 ~.
        \]
        Recall from above that we already know $\vec u \allgreater \vec 0$.
        Thus $\lim_{k \to \infty} M^k \vec u = \vec 0$ means that a positive linear combination of the entries of each individual row of $M^k$ converges to zero, i.e., for all $1 \leq i \leq n$ we have $\lim_{k \to \infty} \sum_{j}M^k_{i,j} \vec u_j = 0$, and thus for all $1 \leq j \leq n$, $\lim_{k \to \infty} M^k_{i,j} = 0$.
        Thus $\lim_{k \to \infty} M^k = 0$, which completes the proof.
        %
        %
        \qed
    \end{proof}
    
\end{proof}

\subsection{Proof of~\Cref{thm:oviSoundAndComplete}}
\label{proof:oviSoundAndComplete}

\oviSoundAndComplete*
\begin{proof}
    Correctness is obvious, so we only show termination assuming that $\system f$ is feasible and $\idMat - \jac{\system f}(\lfp\system f)$ is non-singular.
    Clearly, the algorithm terminates iff it eventually finds a $\candidate$ in line 8 which is inductive.
    
    Assume towards contradiction that the algorithm never terminates, i.e., it never finds an inductive $\candidate$.
    For all $i \geq 1$ let $\currLower_i$, $\vec v_i$, $\tol_i$ be the values of the variables $\currLower$, $\vec v$ and $\tol$ at the $i$th time the inner loop at line 7 is reached (note that we then have $N = i-1$).
    Clearly, $\lim_{i\to\infty} \tol_i = 0$.
    By the contract satisfied by $\Update$, we have $\lim_{i\to\infty}\jac{\system f}(\currLower_i) = \jac{\system f}(\lfp\system f)$.
    Since the eigenvectors of $\jac{\system f}(\lfp\system f)$ depend continuously on those of the matrices $\jac{\system f}(\currLower_i)$, and because of the contract satisfied by $\ApproxEV$, the sequence $\vec v_1, \vec v_2, \ldots$ converges to the true unique normalized Perron-Frobenius eigenvector $\vec{v}_{true}$ of $\jac{\system f}(\lfp\system f)$.
    
    We now apply condition \ref{it:cone} of \Cref{thm:existsInductiveChar}.
    The condition ensures that the cone
    \[
    C(\lfp\system f, \vec{v}_{true}, \eps', \delta_{max})
    ~=~
    \{\, \lfp{\system f} + \delta \vec{\tilde v} \mid 0 \leq \delta \leq \delta_{max}, \maxnorm{\vec{\tilde v} - \vec{v}_{true}} \leq \eps' \,\}
    \]
    which is located at $\lfp\system f$, points in direction $\vec{v}_{true}$ and has radius $\eps'$ and length $\delta_{max}$ contains only inductive points.
    For the sake of illustration suppose that the algorithm already knows $\delta_{max}$ and computes $\vec u_i = \currLower_i + \delta \vec v_i$ for some $0 < \delta < \delta_{max}$ instead of executing the loop starting at line 7.
    But then the sequence $(\vec u _i)_{i \geq 1}$ converges to $\lfp\system f + \delta \vec v_{true}$, which is a point that lies \emph{inside the interior} of $C$, so there must be some $i \geq 1$ such that $\vec u_i \in C$, i.e., $\vec u _i$ is inductive.
    
    The remaining difficulty is that $\delta_{max}$ is of course unknown in practice.
    We handle this using the inner loop that starts at line 7.
    Eventually, the variable $N$ is sufficiently large such that $d^k \eps < \delta_{max}$ for some $k \leq N$.
    Termination then follows by applying the argument in the previous paragraph to $\delta = d^k \eps$.
    \qed
\end{proof}

\subsection{Proof of~\Cref{thm:powerIteration}}
\label{proof:powerIteration}

\powerIteration*

\begin{proof}
   Consider the following conditions for an irreducible matrix $M \geq 0$ and a vector $M \vec v_0$ with $M \vec v_0 \neq \vec 0$:
   \begin{enumerate}
       \item $M$ has a unique dominant eigenvalue $|\lambda_1| > |\lambda_2| \geq \ldots \geq |\lambda_n|$.
       \item $\lambda_1$ is \emph{semisimple}, i.e., its algebraic multiplicity\footnote{The algebraic multiplicity is the multiplicity of a given eigenvalue as a root of the characteristic polynomial.} equals its geometric multiplicity\footnote{The geometric multiplicity is the dimension of the eigenspace associated with a particular eigenvalue.}.
       \item $\vec v _0$ is not orthogonal to the eigenspace $\{\vec{v} \mid M \vec v = \lambda_1\vec{v}\}$.
   \end{enumerate}
   It is known that if all these conditions are satisfied, then the power iteration sequence $(\vec v _i)_{i \in \Nats}$ converges to a (normalized) eigenvector $\vec{v}$ with eigenvalue $\lambda_1$ (e.g.~\cite[Theorem 4.1]{saad2011numerical}).
   
   We now show that these conditions are satisfied for the irreducible matrix $M+\idMat \geq 0$ and every initial vector $\vec v_0 > \vec 0$.
   The eigenvectors of $M$ and $M + \idMat$ are exactly the same but the eigenvalues are all shifted by $+1$.
   Indeed, if $\vec v$ is some eigenvector of $M$ with eigenvalue $\lambda$, then $(M+\idMat)\vec v = \lambda \vec v + \vec v = (\lambda + 1)\vec v$.
   However, unlike $M$, the matrix $M + \idMat$ always has period $1$, and so it has a unique dominant eigenvalue $\lambda_1$ by \Cref{thm:perronFrobenius}(2).
   Therefore the first of the above three conditions is satisfied by the matrix $M + \idMat$.
   
   Next, by \Cref{thm:perronFrobenius}(1) it holds that the geometric multiplicity of $\lambda_1$ is 1.
   As the algebraic multiplicity is bounded by the geometric multiplicity, it must also be 1 and thus the matrix $M+\idMat$ satisfies the second condition as well.
   
   Finally, the third condition is satisfied for any $\vec v_0 > \vec 0$ because the scalar product $\vec v_0 \cdot \vec v$ is non-zero (either strictly positive or strictly negative) for all non-zero eigenvectors $\vec v$ of $\lambda_1$ by \Cref{thm:perronFrobenius}(1).
   \qed
\end{proof}

\subsection{Proof of \Cref{thm:basicCertificate}}

\basicCertificate*

\newcommand{\systemRpMod}[1]{\tilde{\system{f}}_{#1}}

\begin{proof}
    This proof closely follows the general idea of decomposed analysis of PPS~\cite{DBLP:conf/stacs/EtessamiY05}.
    
    We first address existence.
    Note that $\systemRp{\ppda}$ is guaranteed to be feasible, in fact $\vec 0 \leq \lfp\systemRp{\ppda} \leq \vec 1$.
    For all $qZr$ with $(\lfp{\systemRp{\ppda}})_{qZr} = 0$ we set $\vec u_{qZr} = 0$.
    By removing these variables from $\systemRp{\ppda}$ we obtain the clean PPS $\systemRpClean{\ppda}$ with $\vec 0 \allsmaller \lfp\systemRpClean{\ppda}$.
    
    Now consider the decomposition of $\systemRpClean{\ppda}$ into the subsystems induced by the strongly connected components of the graph $G_{\systemRpClean{\ppda}}$:
    $\systemRpClean{\ppda}^1,\ldots,\systemRpClean{\ppda}^m$.
    Note that in these subsystems, some variables might only appear on the right hand sides but not on the left (e.g. $x_1 = 0.5x_1 + 0.5x_2, x_2 = 0.5x_1 + 0.5x_3$).
    Since $\lfp\systemRpClean{\ppda} \allgreater \vec 0$, there is a 1 - 1 correspondence of these subsystems and the maximal irreducible submatrices $M_i$ of $\jac{\systemRpClean{\ppda}}(\lfp\systemRpClean{\ppda})$.
    More specifically, $M_i = \jac\systemRpClean{\ppda}^i(\lfp{\systemRpClean{\ppda}})$\footnote{The Jacobi matrix of a sub-PPS with $n' < n$ equations is an $n' \times n'$ matrix where all variables that occur only on the right hand sides are considered constants.}.
    By assumption, $\specRad{M_i} < 1$\footnote{The spectral radius of the zero matrix is zero.}.
    
    Now assume w.l.o.g.\ that $\systemRpClean{\ppda}^1$ is a bottom SCC (i.e., in the dependency graph $G_{\systemRpClean{\ppda}}$ there is no path from the variables in $\systemRpClean{\ppda}^1$ to any variable not in $\systemRpClean{\ppda}^1$).
    Then $\systemRpClean{\ppda}^1$ is a strongly connected PPS with $\jac\systemRpClean{\ppda}^1(\lfp{\systemRpClean{\ppda}}) = \jac\systemRpClean{\ppda}^1(\lfp{\systemRpClean{\ppda}^1})$ and we can apply \Cref{thm:existsInductiveChar}\ref{it:cone} to obtain a rational $\vec u^1$ with $\systemRpClean{\ppda}^1(\vec u^1) \leq \vec u^1$ and $\maxnorm{\lfp\systemRpClean{\ppda}^1 - \vec u^1} \leq \eps$ (in fact, we can do this for any $\eps > 0$).
    
    Suppose we have done the above for all bottom SCCs and now start traversing the DAG of SCCs bottom-up, i.e., in reverse topological order.
    Let $\vec u$ be the bound we have constructed to far (i.e., $\vec u$ contains $\vec u^1$ and the bounds from the other bottom SCC as subvectors and is zero elsewhere).
    Note that we can always make $\vec u$ \emph{smaller} while retaining the inductivity property.
    W.l.o.g. suppose that subsystem $\systemRpClean{\ppda}^2$ is one of the first non-bottom SCCs in the reverse topological order.
    The idea is now to modify $\systemRpClean{\ppda}^2$ to a strongly connected PPS $\systemRpMod{\vec u}^2$ by replacing all variables that occur only in right hand sides by their value in $\vec u$.
    Clearly, $\lim_{\vec u \to \lfp\systemRpClean{\ppda}} \jac\systemRpMod{\vec u}^2(\lfp\systemRpMod{\vec u}^2) = \jac\systemRpClean{\ppda}^2(\lfp{\systemRpClean{\ppda}})$.
    This means we can choose $\vec u$ sufficiently close to $\lfp\systemRpClean{\ppda}$ such that the spectral radius of $\jac\systemRpMod{\vec u}^2(\lfp\systemRpMod{\vec u}^2)$ is strictly smaller than 1.
    We can then apply \Cref{thm:existsInductiveChar}\ref{it:cone} to $\systemRpMod{\vec u}^2$ to obtain a rational $\vec u^2$ with $\systemRpMod{\vec u}^2(\vec u^2) \leq \vec u^2$ to enlarge our current $\vec u$ with.
    
    We can repeat this scheme for all finitely many subsystems until we have constructed a rational $\vec u$ with $\systemRpMod{\vec u}^i(\vec u) \leq \vec u$ for all $i$.
    Clearly, this $\vec u$ also satisfies $\systemRpClean{\ppda}(\vec u) \leq \vec u$.
    Finally, we may extend $\vec u$ by zero entries corresponding to the variables that are assigned zero in the lfp of the (not necessarily clean) $\systemRp{\ppda}$.
    This yields an inductive upper bound for $\systemRp{\ppda}$.
    We stress that in order to verify this bound, we neither have to clean $\systemRp{\ppda}$ nor do we have to compute the SCCs.
    
    For complexity observe that $\systemRp{\ppda}$ is cubic in the size of $\ppda$ and that all polynomials in $\systemRp{\ppda}$ have degree at most 2.
    Since multiplication and addition of rational numbers can be done in polynomial time in the number of their bits, evaluating a polynomial of fixed maximum degree can also be done in polynomial time in the size of the polynomial and the number of bits representing the rationals where the polynomial is to be evaluated.
    Note that this is not true for arbitrary polynomials where exponents are encoded in binary: For instance, evaluating the polynomial $x^{2^n}$ (which can be represented with $\mathcal{O}(n)$ bits) at $x = 2$ yields $2^{2^n}$, a number that needs $\mathcal{O}(2^n)$ bits.
    This means that in order to verify certificates efficiently with exact rational arithmetic, it is important that the polynomials in the PPS do not have very high degrees.
    Fortunately, this is the case for pPDA.
\end{proof}

\clearpage
\section{Certificates for Expected Rewards}
\label{app:pPDA_certification_details}

We can certify upper bounds on the expected value of \emph{rewards} collected during the run of a pPDA.
To simplify the presentation, in this section we assume w.l.o.g. that $\trans{qZ}{p}{r\alpha}$ with $p > 0$ implies $|\alpha| \in \{0,2\}$, i.e., all transitions either decrease or increase the stack height by 1.
Let $\rewFun \colon \ppdaStates \to \NonNegReals$ be a state-based reward function.
Consider the following PPS $\systemRew{\ppda}{\rewFun}$ with variables $\{\exRewVar{q}{Z}{r} \mid qZr \in \ppdaStates \times \abStack \times \ppdaStates\}$:

\begin{align*}
\exRewVar{q}{Z}{r}
&~=~
\sum_{\trans{qZ}{p}{sYX}} p \cdot \sum_{t \in Q} \retProb{s}{Y}{t}  \cdot  \retProb{t}{X}{r}  \cdot  K_{qZ,sYX}
~+~
\sum_{\trans{qZ}{p}{r\eps}} p \cdot \rewFun(r)~,
\end{align*}
where $K_{qZ,sYX} = \rewFun(r) + \exRewVar{s}{Y}{t} + \exRewVar{t}{X}{r}$.
Note that $\systemRew{\ppda}{\rewFun}$ is \emph{linear} but uses the return probabilities which are themselves characterized as the lfp of the \emph{non-linear} system $\systemRt{\ppda}$ from \Cref{thm:retProbEqs} as coefficients.

Suppose that in the lfp $\lfp\systemRew{\ppda}{\rewFun}$, each variable $\exRew{q}{Z}{r}$ is assigned the quantity $\exRew{q}{Z}{r} \in \NonNegRealsInfty$.
It follows from the results of \cite{DBLP:conf/lics/EsparzaKM05} that $\exRew{q}{Z}{r}$ equals the \emph{expected value} of the following random variable $\rvRew^r$ under the probability measure $\prob_\ppda^{qZ}$:
\begin{align*}
\rvRew^r(q_0\gamma_0, q_1 \gamma_1, \ldots)
~=~
\sum_{i > 0}^{\firstHit{r\eps}} \rewFun(q_i) 
\end{align*}
where $\firstHit{r\eps}$ is the minimum integer $k$ such that $q_k \gamma_k = r\eps$, or $0$ if no such $k$ exists.
In words, $\exRew{q}{Z}{r}$ is the expected reward accumulated on the runs from $qZ$ to $r\eps$, where it is assumed that runs which never reach $r\eps$ contribute zero reward.
Consequently, $\exRewTermPaths{q}{Z} = \sum_{r \in \ppdaStates} \exRew{q}{Z}{r}$ is the expected reward accumulated on all terminating runs.


\begin{example}
    Setting $\rewFun = 1$ we can characterize the \emph{expected runtime} of pPDA.
    Reconsider \Cref{ex:irrat_probs}.
    The equation system for expected runtimes becomes
    \begin{align*}
    \exRewVar{q}{Z}{q} = & \frac{1}{4} (\retProb{q}{Z}{q}^2(1 {+} 2\exRewVar{q}{Z}{q}) + \retProb{q}{Z}{r}\retProb{r}{Z}{q}(1 {+} \exRewVar{q}{Z}{r} {+} \exRewVar{r}{Z}{q})  ) + \frac{1}{2}  \\
    \exRewVar{q}{Z}{r} = & \frac{1}{4} (\retProb{q}{Z}{q}\retProb{q}{Z}{r} (1 {+} \exRewVar{q}{Z}{q} {+} \exRewVar{q}{Z}{r}) {+} \retProb{q}{Z}{r}\retProb{r}{Z}{r} (1 {+} \exRewVar{q}{Z}{r} {+} \exRewVar{r}{Z}{r})) + \frac{1}{4}
    \end{align*}
    as well as $\exRewVar{r}{Z}{q} = 0$ and $\exRewVar{r}{Z}{r} = 1$.
    The solution is $\exRewVar{q}{Z}{q} = 2063/2624 \approx 0.786$ and $\exRewVar{q}{Z}{r} = 59/82 \approx 0.712$, so the total expected runtime is $\exRewTermPaths{q}{Z} \approx 1.506$.
    \qedexample
\end{example}

\section{Benchmark Programs}
\label{app:programs}

\begin{figure}
    \centering
    \begin{subfigure}[b]{0.3\textwidth}
        \begin{verbatim}
void f() {
  if flip(p) {
    f();
    f();
  }
}

# main block
{
  f();
}
        \end{verbatim}
        \caption{\benchmark{rw-p}}
    \end{subfigure}
    \hfill
    \begin{subfigure}[b]{0.3\textwidth}
    \begin{verbatim}
void f() {
  if flip(1//2) {
    f();
    f();
    f();
  }
}

# main block
{
  f();
}
    \end{verbatim}
    \caption{\benchmark{golden}}
    \end{subfigure}
    \hfill
    \begin{subfigure}[b]{0.3\textwidth}
        \begin{verbatim}
void offspring() {
  while flip(2//5) {
    offspring();
    while flip(3//5) {
      offspring();
    }
  }
}

# main block
{
  offspring();
}
        \end{verbatim}
        \caption{\benchmark{geom-offspring}}
    \end{subfigure}
    \\
    \vspace{1cm}
    \begin{subfigure}[b]{0.3\textwidth}
        \begin{verbatim}
void gen_operator() {
  uniform(4);
}

void gen_expression() {
  prob {
    4//10: uniform(10);
    3//10: { }
    3//10: {
      gen_operator();
      gen_expression();
      gen_expression();
    }
  }
}

void gen_function() {
  gen_operator();
  gen_expression();
  gen_expression();
}

# main block
{
  gen_function();
}
        \end{verbatim}
        \caption{\benchmark{gun-fun}}
    \end{subfigure}
    \hfill
    \begin{subfigure}[b]{0.3\textwidth}
        \begin{verbatim}
void young() {
  int y = uniform(4);
  while(y > 0) {
    young();
    y = y-1;
  }
  int e = uniform(3);
  while(e > 0) {
    elder();
    e = e-1;
  }
}
void elder() {
  int y = uniform(2);
  while(y > 0) {
   young();
    y = y-1;
  }
  int e = uniform(5);
  while(e > 0) {
    elder();
    e = e-1;
  }
}
# main block
{
  young();
}
        \end{verbatim}
        \caption{\benchmark{virus}}
    \end{subfigure}
    \hfill
    \begin{subfigure}[b]{0.3\textwidth}
        \begin{verbatim}
bool f() {
  prob {
  1//2:
    return flip(1//2);
  1//2:
    if f()  {
      return f();
    } else {
      return false;
    }
  }
}

# main blcok
{
bool res1 = f();
...
bool resN = f();
}       
        \end{verbatim}
        \caption{\benchmark{sequentialN}}
    \end{subfigure}
\end{figure}

\begin{figure}
    \centering
    \begin{subfigure}[b]{0.4\textwidth}
        \begin{verbatim}
int f(int n, int m) {
  prob {
    (n+1)//(n+2) : {
      f((n + 1) % m, m);
      f((n + 1) % m, m);
      return 0;
    }
  1//(n+2) :
    return 0;
  }
}

# main block
{
  f(0, N);
}
        \end{verbatim}
        \caption{\benchmark{escapeN}}
    \end{subfigure}
    \hfill
    \begin{subfigure}[b]{0.4\textwidth}
        \begin{verbatim}
void f(int n) {
  while(n > 0) {
    prob {
      2//3: f(n-1);
      1//3: f((n+1) % N);
    }
    n = n-1;
  }
}

# main block
{
  f(1);
}
        \end{verbatim}
        \caption{\benchmark{modN}}
    \end{subfigure}
\end{figure}

\clearpage
\section{Z3 vs CVC5}
\label{app:z3vscvc5}

\begin{table}[h]
    \caption{
        Comparison of the SMT-approach (see \emph{§Baselines} in \Cref{sec:experiments}) using $\toolname{z3}$ and $\toolname{cvc5}$ on SCFG given as explicit PPS (right), and on programs automatically translated to pPDA (left).
    }
    \label{table:programs}
    \centering
    \setlength{\tabcolsep}{5pt}
    \begin{adjustbox}{max width=\textwidth}
        \rowcolors{1}{}{lightgray}
        \begin{tabular}{l|   c          r    | c      r   }
            \toprule
            benchmark             & $cert_{\toolname{z3}}$ & $t_{\toolname{z3}}$ & $cert_{\toolname{cvc5}}$ & $t_{\toolname{cvc5}}$\\ 
            \midrule
            \benchmark{rw-0.499} &      \cmark   & 11      & \cmark   & 92       \\ 
            \benchmark{rw-0.500}     &  \cmark   & 10      & \cmark   & 87      \\ 
            \benchmark{rw-0.501}     &  \cmark   & 12      & \cmark   & 104       \\ 
            \benchmark{geom-offspring}& \cmark   & 16      & \cmark   & 4687       \\ 
            \benchmark{golden}      &   \cmark   & 14      & \cmark   & 1097       \\ 
            \benchmark{and-or}      &   \cmark   & 15260   & \xmark   & TO       \\ 
            \benchmark{gen-fun}     &   \cmark   & 141     & \xmark   & TO       \\ 
            \benchmark{virus}       &   \cmark   & 139     & \cmark   & 163727       \\ 
            \benchmark{escape10}    &   \cmark   & 48      & \cmark   & 12031       \\ 
            \benchmark{escape25}    &   \cmark   & 15958   & \xmark   & TO     \\ 
            \benchmark{escape50}    &   \cmark   & 410     & \xmark   & TO     \\ 
            \benchmark{escape75}    &   \xmark   & TO      & \xmark   & TO     \\ 
            \benchmark{escape100}    &  \cmark   & 939     & \xmark   & TO     \\ 
            \benchmark{escape200}    &  \xmark   & TO      & \xmark   & TO     \\ 
            \benchmark{sequential5}   & \cmark   & 1074    & \xmark   & TO     \\ 
            \benchmark{sequential7}   & \cmark   & 12822   & \xmark   & TO     \\ 
            \benchmark{sequential10}  & \cmark   & 453718  & \xmark   & TO     \\ 
            \benchmark{mod5}          & \cmark   & 34150   & \xmark   & TO     \\ 
            \benchmark{mod7}          & \cmark   & 443     & \xmark   & TO     \\ 
            \benchmark{mod10}         & \cmark   & 1245    & \xmark   & TO     \\ 
            \bottomrule
        \end{tabular}
        \begin{tabular}{    l      |  c      r        |  c      r       }
            \toprule
            benchmark             & $cert_{\toolname{z3}}$  & $t_{\toolname{z3}}$ & $cert_{\toolname{cvc5}}$ & $t_{\toolname{cvc5}}$\\ 
            \midrule
            \benchmark{brown}      & \xmark      & TO       & \xmark   & TO     \\ 
            \benchmark{lemonde}    & \xmark      & TO       & \xmark   & TO     \\ 
            \benchmark{negra}      & \cmark      & 37248    & \cmark   & 10144     \\ 
            \benchmark{swbd}       & \xmark      & TO       & \xmark   & Error     \\ 
            \benchmark{tiger}      & \cmark      & 17454    & \cmark   & 16118     \\ 
            \benchmark{tuebadz}    & \cmark      & 15323    & \cmark   & 5534     \\ 
            \benchmark{wsj}        & \xmark      & TO       & \xmark   & TO     \\ 
            \benchmark{random}     & \xmark      & TO       & \xmark   & TO     \\ 
            \bottomrule
        \end{tabular}
    \end{adjustbox}
\end{table}

}{}

\end{document}